\pdfoutput=1
\documentclass[10pt,letterpaper,aps,prd,notitlepage,longbibliography,nofootinbib]{revtex4-1}
\usepackage[english]{babel}
\usepackage[left=0.94in,right=0.94in,top=0.94in,bottom=0.94in]{geometry}
\usepackage{microtype}
\usepackage[utf8]{inputenc}
\usepackage{amsmath}
\usepackage{amsfonts}
\usepackage{amssymb}
\usepackage{amsthm}
\usepackage{braket}
\usepackage{bbm}
\usepackage[pdfusetitle]{hyperref}
\usepackage[all]{hypcap}
\usepackage{booktabs}
\usepackage{tensor}
\usepackage{xcolor}
\hypersetup{
	colorlinks,
	linkcolor={red!80!black},
	citecolor={blue!70!black},
	urlcolor={purple!90!black}
}
\urlstyle{same}

\renewcommand{\vec}{\mathbf}

\DeclareMathOperator{\Span}{span}
\DeclareMathOperator{\sgn}{sgn}
\DeclareMathOperator{\Lin}{Lin}
\newcommand{\ii}{\mathbbm{i}}
\newcommand{\half}[1][1]{\tfrac{#1}{2}}
\newcommand{\shalf}[1][1]{\frac{#1}{2}} 
\newcommand{\Rho}{\mathrm{P}}
\newcommand{\casesif}{\quad\textnormal{if }\,} 
\newcommand{\casestextn}[1]{\quad\textnormal{#1}} 
\newcommand{\setst}{\,|\,} 
\newcommand{\CG}{Clebsch--Gordan}
\newcommand{\JS}{Jordan--Schwinger}
\newcommand{\WE}{Wigner--Eckart}
\newcommand{\R}{\mathbb{R}}
\newcommand{\C}{\mathbb{C}}
\newcommand{\K}{\mathbb{K}}
\newcommand{\Z}{\mathbb{Z}}
\newcommand{\N}{\mathbb{N}}
\newcommand{\1}{\mathbbm{1}}
\newcommand{\cC}{\mathcal{C}}
\newcommand{\cJ}{\mathcal{J}}
\newcommand{\cM}{\mathcal{M}}
\newcommand{\GL}{\mathrm{GL}}
\newcommand{\SL}{\mathrm{SL}}
\newcommand{\SO}{\mathrm{SO}}
\newcommand{\SU}{\mathrm{SU}}
\newcommand{\Spin}{\mathrm{Spin}}

\newcommand{\su}{\mathfrak{su}}
\newcommand{\spin}{\mathfrak{spin}}

\newcommand*\widebar[1]{%
  \hbox{
    \vbox{%
      \hrule height 0.2pt 
      \kern0.35ex
      \hbox{%
        \kern-0.1em
        \ensuremath{#1}%
        \kern-0.1em}}}}
\theoremstyle{plain}
\newtheorem{proposition}{Proposition}
\newtheorem{definition}{Definition}
\newtheorem*{theorem*}{Theorem}
\newtheorem*{remark}{Remark}
\newtheorem{corollary}{Corollary}
\newtheorem{lemma}{Lemma}
\newtheorem{appxprop}{Proposition}

\begin{document}
\title{\WE\ theorem and \JS\ representation for infinite-dimensional representations of the Lorentz group}
\author{Giuseppe Sellaroli}\email{gsellaroli@uwaterloo.ca}
\affiliation{Department of Applied Mathematics, University of Waterloo, Waterloo\\Ontario N2L 3G1, Canada}
\date{\today}
\begin{abstract}
The Wigner--Eckart theorem is a well known result for tensor operators of $\SU(2)$ and, more generally, any compact Lie group.
This paper generalises it to arbitrary Lie groups, possibly non-compact.
The result relies on knowledge of recoupling theory between finite-dimensional and arbitrary admissible representations, which may be infinite-dimensional; the particular case of the Lorentz group will be studied in detail.
As an application, the Wigner--Eckart theorem will be used to construct an analogue of the Jordan--Schwinger representation, previously known only for finite-dimensional representations of the Lorentz group, valid for infinite-dimensional ones.
\end{abstract}
\maketitle
\section{Introduction}
Representation theory of Lie groups has many applications in modern physics, especially in quantum theory.
In particular tensor operators for compact Lie groups have been widely used in non-relativistic quantum mechanics~\cite{messiah2}, mainly because of the fact---known as \emph{\WE\ theorem}---that their matrix elements can be expressed as a product of a \CG\ coefficient\footnote{The coefficients
appearing in the decomposition of the tensor product of two irreducible representations as the direct sum of irreducible representations.} and a factor independent of the particular basis vectors being evaluated.

Although only compact groups are generally considered, one may wonder if these result extends to non-compact ones as well.
A generalisation of the \WE\ theorem to non-compact groups has been considered in~\cite{locallycompact}, but only for tensor operators ``transforming''\footnote{See Section \ref{subsec:tensor_operators} for the precise definition of this statement.} as unitary representations, which are generally infinite-dimensional in the non-compact case; here tensor operators with finitely many components will be considered.
The particular case of $\SL(2,\R)$ was previously studied by the author in~\cite{wigner_eckart}; in the present paper that result is generalised to any Lie group. The information provided by the \WE\ theorem, however, relies on the knowledge of the \CG\ decomposition of the product of a finite-dimensional representation and an arbitrary admissible one, which may be infinite-dimensional if the group is non-compact. Since a general result about this decomposition is not available, each case has to be treated individually; in particular the couplings for the Lorentz group---previously unconsidered\footnote{To the best of the author's knowledge.}---will be considered here. Most of the paper will be dedicated on this task.

An application of the theorem will also be considered.
An important result of $\SU(2)$ representation theory, especially useful in quantum field theory, is the \JS\ representation, i.e., the reformulation of the $\su(2)$ generators in terms of two quantum harmonic oscillator operators.
A similar result for the Lorentz group exists only for finite-dimensional representations, which can be seen as the product of two $\SU(2)$ representations (see Section~\ref{sec:representation_theory}).
It will be shown that, making use of the \WE\ theorem, the infinite-dimensional representations admit an analogous construction in terms of tensor operators.

The paper is organized as follows: Section~\ref{sec:representation_theory} contains a review of representation theory of the Lorentz group and the analysis of the product of a finite-dimensional and an infinite-dimensional representation.
Results concerning tensor operators are presented in Section~\ref{sec:tensor_operators}: definition, \WE\ theorem and \JS\ representation.
Finally, a table of notations used throughout the paper, a table of \CG\ coefficients for the coupling of finite and infinite-dimensional representations and some results needed in Section~\ref{sec:representation_theory} are included as appendices.
\section{Representation theory}\label{sec:representation_theory}
This section contains both preliminary notions and new results about the representation theory of the Lorentz group. First the definition of Lorentz group and Lorentz algebra and their irreducible admissible representations will be recalled, then the decomposition of the product of a finite-dimensional and an infinite-dimensional representation will be studied. The reviews of representation theory follows~\cite{Harish-Chandra1947,knapp,wallach}.
\subsection{The Lorentz group and its Lie algebra}

The \emph{proper orthochronous Lorentz group} $\SO_0(3,1)$, henceforth simply referred to as the Lorentz group, is the identity component of the subgroup of $\GL(4,\R)$ that preserves the quadratic form
\begin{equation}
Q(x)=-(x_0)^2 + (x_1)^2 + (x_2)^2 + (x_3)^2,\quad x=(x_0,x_1,x_2,x_3)\in \R^4.
\end{equation}
To allow for spin representations, the double cover $\Spin(3,1)\cong \SL(2,\C)_\R$ of $\SO_0(3,1)$ will be used here; moreover, only complex representations will be considered, so that one may work with  a complexified Lie algebra.

The Lie algebra $\spin(3,1)_\C$ has 6 generators
\begin{equation}
\vec{J}=(J_0,J_1,J_2),\quad \vec{K}=(K_0,K_1,K_2),
\end{equation}
with commutation relations
\begin{equation}\label{eq:spin(3,1)_commutation}
[J_a,J_b]=\ii\tensor{\varepsilon}{_{ab}^c}J_c,\quad [J_a,K_b]=\ii\tensor{\varepsilon}{_{ab}^c}K_c,\quad[K_a,K_b]=-\ii\tensor{\varepsilon}{_{ab}^c}J_c.
\end{equation}
The $J$'s generate the subalgebra $\spin(3)\cong\su(2)$ (i.e., spatial rotations), while the $K$'s are the generators of boosts.
The algebra has two Casimirs
\begin{equation}
\cC_1=\vec{J}\cdot\vec{K},\quad \cC_2=J^2-K^2
\end{equation}
which, introducing the operators
\begin{equation}
J_\pm:=J_1\pm i J_2,\quad K_\pm:=K_1\pm i K_2
\end{equation}
and making use of~\eqref{eq:spin(3,1)_commutation},
can be rewritten as
\begin{equation}
\cC_1=J_0K_0+\tfrac{1}{2}(J_-K_+ + J_+K_-),\quad \cC_2=J^2-(J_0 + K_0^2+K_+K_-).
\end{equation}

\subsection{Irreducible representations of the Lorentz group}
We will restrict to \emph{admissible} Hilbert space representations (which include, in particular, every unitary representation), i.e., those that are unitary when restricted to the maximal compact subgroup $\SU(2)$ and such that each irreducible unitary representation of $\SU(2)$ appears in it with finite multiplicity.

Each admissible representation induces an admissible $(\mathfrak{g},K)$-module\footnote{An algebraic object that, given a Lie group $G$ with Lie algebra $\mathfrak{g}$ and a maximal compact subgroup $K$, is both a Lie algebra representation of $\mathfrak{g}$ and a group representation of $K$, with the appropriate compatibility conditions ~\cite{wallach}. Here a \emph{module} is a vector space equipped with a linear action of one of more rings on it (e.g., $\mathfrak{g}$ and $K$).}, with $\mathfrak{g}=\spin(3,1)$ and $K=SU(2)$, which as an algebraic object is easier to work with; although not true in general, for $\Spin(3,1)$ admissible irreducible representations of the group and admissible irreducible $(\mathfrak{g},K)$-modules are in one-to-one correspondence\footnote{Compare the list of admissible group representations in~\cite[chap. VIII]{knapp} and the list of admissible $(\mathfrak{g},K)$-modules in~\cite[chap. 4]{wallach}.}.

The general irreducible admissible $(\mathfrak{g},K)$-module, labelled by a pair $(\lambda,\rho)\in\Z/2\times\C$, is the \emph{algebraic} direct sum\footnote{i.e., only sums of finitely many vectors are considered.}
\begin{equation}
V_{\lambda,\rho} =\bigoplus_{j=|\lambda|}^{j_\textnormal{max}} V^j_{\lambda,\rho}
\end{equation}
of unitary irreducible $\SU(2)$-modules $V^j_{\lambda,\rho}$, where the sum is in integer steps and, depending on the values of $\lambda$ and $\rho$, it is either $j_\textnormal{max}\in |\lambda|+\N_0$ or $j_\textnormal{max}=\infty$ (see later discussion). The (complex) vector space $V_{\lambda,\rho}^j$ is spanned by the basis
\begin{equation}
\label{eq:basis}
\ket{(\lambda,\rho)\,j,m}, \quad m\in\cM_j:=\lbrace-j,-j+1,\dotsc,j-1,j\rbrace,
\end{equation}
on which the $\su(2)_\C$ generators act as
\begin{equation}
\begin{cases}
J_0\ket{(\lambda,\rho)\,j,m}=m\ket{(\lambda,\rho)\,j,m}\\
J_\pm \ket{(\lambda,\rho)\,j,m}=C_\pm(j,m)\ket{(\lambda,\rho)\,j,m\pm 1}\\
J^2\ket{(\lambda,\rho)\,j,m}=j(j+1)\ket{(\lambda,\rho)\,j,m},
\end{cases}
\end{equation}
with
\begin{equation}
C_\pm(j,m):=\sqrt{j\mp m}\sqrt{j \pm m+1},
\end{equation}
i.e., they are eigenvectors for $J_0$ and $J^2$; since $\SU(2)$ is simply connected, its action on $V_{\lambda,\rho}^j$ is completely determined by the corresponding Lie algebra action.
The $(\mathfrak{g},K)$-module will be given an inner product by requiring the $\SU(2)$-modules to be orthogonal to each other and the vectors in~\eqref{eq:basis} to be orthonormal.

The possible matrix elements of the boost generators are
\begin{subequations}
\label{eq:K_actions}
\begin{align}
\braket{j+1,m\pm1|K_\pm|j,m}&=\mp P^+_{\lambda,\rho}(j)\sqrt{j\pm m +1}\sqrt{j\pm m +2}\\
\braket{j+1,m|K_0|j,m}&= P^+_{\lambda,\rho}(j)\sqrt{j+ m +1}\sqrt{j- m +1}\\[1em]
\braket{j,m\pm1|K_\pm|j,m}&= P_{\lambda,\rho}(j)\,C_\pm(j,m)\\
\braket{j,m|K_0|j,m}&= P_{\lambda,\rho}(j)\,m\\[1em]
\braket{j-1,m\pm1|K_\pm|j,m}&=\pm P^-_{\lambda,\rho}(j)\sqrt{j\mp m}\sqrt{j\mp m -1}\\
\braket{j-1,m|K_0|j,m}&= P^-_{\lambda,\rho}(j)\sqrt{j+ m}\sqrt{j- m},
\end{align}
\end{subequations}
where
\begin{equation}
P^-_{\lambda,\rho}(j)=\frac{\sqrt{j+\lambda}\sqrt{j-\lambda}\sqrt{j+\rho}\sqrt{j-\rho}}{j\sqrt{2j+1}\sqrt{2j-1}},\quad P_{\lambda,\rho}^+(j)=P_{\lambda,\rho}^-(j+1)
\end{equation}
and
\begin{equation}
P_{\lambda,\rho}(j)=
\begin{cases}
\frac{\ii\lambda\rho}{j(j+1)}&\casesif j\neq 0\\
0&\casesif j=0.
\end{cases}
\end{equation}
The Casimirs act on $V_{\lambda,\rho}$ as
\begin{equation}
\cC_1=\ii\lambda\rho\,\1,\quad \cC_2=(\lambda^2+\rho^2-1)\1.
\end{equation}
The values $j$ can take have an upper bound $j_\textnormal{max}\in |\lambda|+\N_0$ if and only if
\begin{equation}
P_{\lambda,\rho}^+(j_\textnormal{max})=0\quad\textnormal{and}\quad P_{\lambda,\rho}^+(j)\neq 0 \quad\forall j<j_\textnormal{max},
\end{equation}
i.e.,
\begin{equation}
\rho=\pm(j_\textnormal{max}+1).
\end{equation}
It follows that $V_{\lambda,\rho}$ is finite-dimensional when $\rho\in \pm (|\lambda|+\N)$ and it is infinite-dimensional in all other cases.
\begin{remark}[isomorphic modules]
The values of the Casimirs and of $P_{\lambda,\rho}(j)$, $P^+_{\lambda,\rho}(j)$ and $P^-_{\lambda,\rho}(j)$ are invariant under the change $(\lambda,\rho)\rightarrow (-\lambda,-\rho)$;
moreover, whether the module is finite-dimensional and the eventual value of $j_\textnormal{max}$ are unaffected by the change as well. It follows that the modules $V_{\lambda,\rho}$ and $V_{-\lambda,-\rho}$ are isomorphic.
\end{remark}

Unitary modules are those for which
\begin{equation}
K_0^\dagger=K_0,\quad K_+^\dagger=K_-,
\end{equation}
with respect to the inner product on $V_{\lambda,\rho}$.
Explicitly, it must be
\begin{equation}
\widebar{P_{\lambda,\rho}(j)}=P_{\lambda,\rho}(j),\quad \widebar{P^-_{\lambda,\rho}(j)}=P^-_{\lambda,\rho}(j),
\end{equation}
which is satisfied by three possible classes of modules:
\begin{itemize}
\item \emph{principal series}: $\lambda\in\Z/2$ and $\rho \in \ii\R$;
\item \emph{complementary series}: $\lambda=0$ and $\rho\in(-1,0)\cup(0,1)$;
\item \emph{trivial representation}: $\lambda=0$ and $\rho=\pm1$.
\end{itemize}

\subsubsection*{Finite-dimensional modules}\label{par:finite-dimensional}
It was shown that the $(\mathfrak{g},K)$-module with $\rho=B(\omega+1)$, $\omega\in |\lambda|+\N_0$, $B=\pm 1$ is finite-dimensional. We will assume, for finite-dimensional modules (and for those only), that $\lambda\geq 0$. It is then easy to check that
\begin{equation}
\dim V_{\lambda,\rho} =
\sum_{j=\lambda}^\omega (2j+1)=(\omega-\lambda+1)(\omega+\lambda+1).
\end{equation}
Finite-dimensional modules can be given an alternative construction using the fact that
\begin{equation}
\spin(3,1)_\C\cong \su(2)_\C \oplus \su(2)_\C,
\end{equation}
i.e., by changing to the basis
\begin{equation}
\vec{M}^A:=\half(\vec{J}-\ii A \vec{K}),\quad A=\pm 1,
\end{equation}
with commutation relations
\begin{equation}
[M^A_a,M^B_b]=\ii \tensor{\varepsilon}{_{ab}^c}M^A_c\delta_{AB};
\end{equation}
one can easily show that, for finite-dimensional modules,
\begin{equation}
K_0^\dagger=-K_0,\quad K_+^\dagger = - K_-,
\end{equation}
so that each $M_a^A$ is self-adjoint, i.e., the action of each $\su(2)$ subalgebra is unitary.

From $\su(2)$ representation theory we know that, if $V_j$ is the $(2j+1)$-dimensional unitary irreducible $\su(2)$-module,
\begin{equation}
\bigoplus_{j=\lambda}^\omega V_j \cong V_{\half[\omega+\lambda]}\otimes V_{\half[\omega-\lambda]}\cong V_{\half[\omega-\lambda]}\otimes V_{\half[\omega+\lambda]}.
\end{equation}
Since $\vec{J}=\vec{M}^{(-)} + \vec{M}^{(+)}$, one can then change the basis to
\begin{equation}
\ket{j_1,m_1}\otimes \ket{j_2,m_2}=\sum_{j=\lambda}^\omega \sum_{m=-j}^j \braket{j,m|j_1,m_1;j_2,m_2}\ket{(\lambda,\rho)\,j,m},
\end{equation}
where $\braket{j,m|j_1,m_1;j_2,m_2}$ are the $\su(2)$ \emph{\CG\ coefficients}\footnote{The Condon–Shortley convention~\cite{condon_shortley} is used here.} and
\begin{equation}
\begin{cases}
j_1=\frac{\omega+\lambda}{2}\\
j_2=\frac{\omega-\lambda}{2}
\end{cases}
\textnormal{or}\quad
\begin{cases}
j_1=\frac{\omega-\lambda}{2}\\
j_2=\frac{\omega+\lambda}{2};
\end{cases}
\end{equation}
it is assumed that $\vec{M}^{(-)}$ and $\vec{M}^{(+)}$ only act respectively on $\ket{j_1,m_1}$ and $\ket{j_2,m_2}$. The dimension of the new basis is
\begin{equation}
(2j_1+1)(2j_2+1)=(\omega+\lambda+1)(\omega-\lambda+1),
\end{equation}
as expected.
The choice of $j_1$, $j_2$ depends on the sign of $\rho$: in fact, one has
\begin{equation}
\label{eq:finite-dim_consistency}
\cC_1\ket{j_1,m_1}\otimes \ket{j_2,m_2}=\ii B \lambda(\omega+1)\ket{j_1,m_1}\otimes \ket{j_2,m_2},
\end{equation}
but also
\begin{equation}
\cC_1=\sum_A \ii A \left(M^A\right)^2,
\end{equation}
so that~\eqref{eq:finite-dim_consistency} is consistent if and only if
\begin{equation}
j_1=\frac{\omega-B\lambda}{2},\quad j_2=\frac{\omega+B\lambda}{2}.
\end{equation}

Conversely, one can show that every product of $\su(2)$-modules
\begin{equation}
V_{j_1}\otimes V_{j_2},\quad j_1,j_2\in\N_0/2
\end{equation}
gives rise to a Lorentz $(\mathfrak{g},K)$-module with
\begin{equation}
\lambda=|j_1-j_2|,\quad \rho=
\begin{cases}
(j_1+j_2+1)&\casesif j_1 < j_2\\
-(j_1+j_2+1) & \casesif j_1\geq j_2.
\end{cases}
\end{equation}
As a consequence, every finite-dimensional irreducible $(\mathfrak{g},K)$-module can be specified by a pair $(j_1,j_2)\in \N_0/2 \times \N_0/2$; is it customary to use the pair to denote the module itself.

Examples of finite-dimensional modules are
\begin{itemize}
\item $(0,0)$: the \emph{scalar} module (trivial representation);
\item $(\half,0)$ and $(0,\half)$: respectively the left and right \emph{Weyl spinor} modules;
\item $(\half,\half)$: the (complexified) \emph{vector} module;
\item $(\half,0)\oplus(0,\half)$: the \emph{Dirac spinor} module (not irreducible).
\end{itemize}
It is not difficult to infer the decomposition of the product of two finite-dimensional modules from $\su(2)$ results; one has
\begin{equation}
(j_1,k_1)\otimes(j_2,k_2)\cong \bigoplus_{j=|j_1-j_2|}^{j_1+j_2}\bigoplus_{k=|k_1-k_2|}^{k_1+k_2}(j,k),
\end{equation}
and, in particular,
\begin{equation}
\label{eq:finite_recoup}
(j_1,j_2)\equiv (j_1,0)\otimes(0,j_2).
\end{equation}
We will refer to modules of the kind $(j,0)$ and $(0,j)$ respectively as \emph{left} and \emph{right} modules; it follows from~\eqref{eq:finite_recoup} that any other irreducible module can be constructed from the product of a left and right one. To allow to easily specify if a module is left or right, the notation
\begin{equation}
F^A_j:=
\begin{cases}
(j,0) & \casesif A=-1\\
(0,j) &\casesif A=1
\end{cases}
\end{equation}
will be used in the following sections. A basis for $F^A_j$ is given by
\begin{equation}
\ket{j_A,\mu},\quad \mu\in\cM_j,
\end{equation}
with
\begin{equation}
\begin{cases}
J_0\ket{j_A,\mu}=\mu \ket{j_A,\mu}\\
J_\pm \ket{j_A,\mu}=C_\pm(j,\mu)\ket{j_A,\mu\pm 1}
\end{cases}
\textnormal{and}\quad
\begin{cases}
K_0\ket{j_A,\mu}=\ii A \mu \ket{j_A,\mu}\\
K_\pm \ket{j_A,\mu}=\ii A C_\pm(j,\mu)\ket{j_A,\mu\pm 1}.
\end{cases}
\end{equation}

\subsection{Decomposition of the product of finite and infinite-dimensional modules}\label{sec:CG_decomposition}

We are interested in the tensor product of a non-trivial finite-dimensional module (necessarily non-unitary) and an infinite-dimensional one (either unitary or non-unitary).
In light of the consequences of~\eqref{eq:finite_recoup} mentioned above, it will suffice to restrict ourselves to couplings of the kind $F^A_\gamma \otimes V_{\lambda,\rho}$, where $\gamma\geq \half$ and $\lambda$, $\rho$ are such that 
\begin{equation}
P^+_{\lambda,\rho}(j)\neq 0,\quad \forall j \in |\lambda|+\N_0.
\end{equation}
The goal is to study the \emph{\CG\ decomposition} of such a product, i.e., wheter it is possible to rewrite such a module (generally reducible) as a sum of irreducible ones.
Equivalently, one can ask if it is possible to \emph{simultaneously} diagonalise the two Casimirs in the product module\footnote{One can check explicitly that the Casimirs acting on the product space are neither self-adjoint nor normal operators, i.e., $[\mathcal{C}_a,\mathcal{C}_a^\dagger]\neq 0$, so that the spectral theorem cannot be used.}, where the generators act as
\begin{equation}
\vec{J}\equiv\vec{J}\otimes \1 + \1\otimes \vec{J},\quad \vec{K}\equiv\vec{K}\otimes \1 + \1\otimes \vec{K}
\end{equation}
on the basis elements
\begin{equation}
\label{eq:product_basis}
\ket{\gamma_A,\mu}\otimes\ket{(\lambda,\rho)\,j,m},\quad j\in |\lambda|+\N_0,\quad m \in \cM_j,\quad \mu \in \cM_\gamma.
\end{equation}

Instead of working with an infinite dimensional vector space, one can decompose the product space into a sum of finite-dimensional spaces: this is accomplished by diagonalising $J_0$ and $J^2$ first. Using $\su(2)$ recoupling theory, one finds that the vectors
\begin{equation}
\ket{(j)\,J,M}:=\sum_{\mu\in\cM_\gamma}\sum_{m\in\cM_j} \braket{\gamma,\mu;j,m|J,M}\ket{\gamma_A,\mu}\otimes\ket{(\lambda,\rho)\,j,m},\quad j\in |\lambda|+\N_0,\quad J\in\lbrace |j-\gamma|,\dotsc,j+\gamma \rbrace
\end{equation}
provide an \emph{orthonormal} basis of $(J_0,J^2)$-eigenvectors for the product space.
The eigenspace $V^J_M$, defined by
\begin{equation}
J_0\ket{\psi}=M\ket{\psi},\quad J^2\ket{\psi}=J(J+1)\ket{\psi},\quad \forall\ket{\psi}\in V^J_M,
\end{equation}
is spanned by the basis vectors
\begin{equation}
\ket{(j)\,J,M},\quad j\in\Omega_J(\lambda,\gamma):=
\begin{cases}
\lbrace \max(|\lambda|,J-\gamma),\dotsc,J+\gamma\rbrace &\casesif J\geq ||\lambda|-\gamma |\\
\lbrace \gamma-J,\dotsc,\gamma+J\rbrace & \casesif J< ||\lambda|-\gamma |,
\end{cases}
\end{equation}
so that
\begin{equation}
\dim V^J_M =
\begin{cases}
\min(J+\gamma-|\lambda|+1,2\gamma+1) &\casesif J\geq ||\lambda|-\gamma |\\
2J+1&\casesif J< ||\lambda|-\gamma |;
\end{cases}
\end{equation}
note that, when $|\lambda|\geq \gamma$, it is always true that $J\geq ||\lambda|-\gamma |=|\lambda|-\gamma$, so that the case $J< ||\lambda|-\gamma |$ only needs to be considered when $|\lambda|<\gamma$. The set of possible values of $J$ is
\begin{equation}
\mathcal{J}(\lambda,\gamma):=\max(\varepsilon,|\lambda|-\gamma)+\N_0,\quad \varepsilon=
\begin{cases}
0 &\casesif \lambda+\gamma\in \Z\\
\half  &\casesif \lambda+\gamma\in \half+\Z.
\end{cases}
\end{equation}
It will prove useful in the following to build $\mathcal{J}(\lambda,\gamma)$ in a different way.
First notice that
\begin{equation}
|\lambda+\nu|\in\mathcal{J}(\lambda,\gamma),\quad \forall\nu\in\cM_\gamma,
\end{equation}
as
\begin{equation}
\begin{cases}
|\lambda+\nu|=|\lambda|+\sgn(\lambda)\nu\in |\lambda|-\gamma +\N_0&\casesif|\lambda|\geq\gamma \\
|\lambda+\nu|\in \varepsilon+\N_0&\casesif|\lambda|<\gamma.
\end{cases}
\end{equation}
Let then
\begin{equation}
\Sigma_\nu(\lambda,\gamma):=|\lambda+\nu|+\N_0,\quad \nu\in \cM_\gamma;
\end{equation}
simple counting arguments show that
\begin{equation}
\label{eq:sigma_decomposition}
\bigcup_{\nu\in\cM_\gamma}\Sigma_\nu(\lambda,\gamma)=\mathcal{J}(\lambda,\gamma)\quad\textnormal{and}\quad \sum_{\nu\in\cM_\gamma}\chi_{\Sigma_\nu}(J)=\dim V^J_M,
\end{equation}
where $\chi$ is the \emph{indicator function}
\begin{equation}
\chi_A(x):=
\begin{cases}
1&\casesif x\in A\\
0&\casesif x\not\in A,
\end{cases}.
\end{equation}

Since the Casimirs commute with both $J_0$ and $J^2$, one can work with their restriction on the finite-dimensional subspaces $V^J_M$ and diagonalise those; moreover, it suffices to consider the restrictions to $V_J:=V^J_J$ thanks to the following
\begin{proposition}\label{prop:JM}
Let $J\in\mathcal{J}(\lambda,\gamma)$. The eigenvalues of the Casimirs $\cC_1$ and $\cC_2$ are the same on each $V^J_M$, $M\in\cM_J$.
\end{proposition}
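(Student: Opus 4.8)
The plan is to exploit the fact that the raising and lowering operators $J_\pm$ commute with both Casimirs and intertwine the weight spaces $V^J_M$. First I would recall that $\cC_1$ and $\cC_2$ are central, so in particular $[\cC_a,J_0]=[\cC_a,J_\pm]=0$ for $a=1,2$. Hence, for $\ket{\psi}\in V^J_M$, the standard $\su(2)$ relations $[J_0,J_\pm]=\pm J_\pm$ and $[J^2,J_\pm]=0$ give $J_0(J_\pm\ket{\psi})=(M\pm1)(J_\pm\ket{\psi})$ and $J^2(J_\pm\ket{\psi})=J(J+1)(J_\pm\ket{\psi})$, so that $J_\pm$ maps $V^J_M$ into $V^J_{M\pm1}$.

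Next I would upgrade this to a linear isomorphism whenever the target weight still lies in $\cM_J$. On the orthonormal basis one has $J_\mp\ket{(j)\,J,M}=C_\mp(J,M)\ket{(j)\,J,M\mp1}$ for every $j\in\Omega_J(\lambda,\gamma)$, with the label set $\Omega_J(\lambda,\gamma)$ independent of $M$; consequently $\dim V^J_M$ is independent of $M$ as well. Since $C_\mp(J,M)=\sqrt{J\pm M}\sqrt{J\mp M+1}\neq 0$ precisely when $M\mp1\in\cM_J$, the operator $J_\mp$ sends the basis of $V^J_M$ to a nonzero multiple of the equally large basis of $V^J_{M\mp1}$ in a label-preserving fashion, and is therefore a bijection $V^J_M\to V^J_{M\mp1}$.

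Finally, because $\cC_a$ commutes with $J_\mp$, conjugating the restriction $\cC_a|_{V^J_M}$ by this isomorphism yields $\cC_a|_{V^J_{M\mp1}}$; writing everything out in the bases $\{\ket{(j)\,J,M}\}_{j}$ and $\{\ket{(j)\,J,M\mp1}\}_{j}$ and cancelling the nonzero scalar $C_\mp(J,M)$, one finds that the two restrictions are in fact represented by the very same matrix. Similar operators share their characteristic polynomial, so $\cC_a|_{V^J_M}$ and $\cC_a|_{V^J_{M\mp1}}$ have the same eigenvalues; stepping from $M=J$ down to an arbitrary $M\in\cM_J$ gives the statement. I do not expect a genuine obstacle here: the argument is essentially bookkeeping, and the only point that must be handled with care is the $M$-independence of $\dim V^J_M$, which is what promotes injectivity of $J_\mp$ to a bijection and makes the conjugacy (rather than a one-sided intertwining) legitimate.
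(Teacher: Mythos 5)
Your proof is correct and rests on the same core idea as the paper's: $[\cC_a,J_\pm]=0$ together with the $j$-independent action $J_\pm\ket{(j)\,J,M}=C_\pm(J,M)\ket{(j)\,J,M\pm1}$ and the non-vanishing of $C_\pm(J,M)$ away from the boundary weights. The paper simply transports individual eigenvectors between the $V^J_M$ using injectivity of $J_\pm$, whereas you package the same mechanism as a basis-preserving bijection making the restrictions $\cC_a|_{V^J_M}$ literally the same matrix, which is a mild strengthening (it also gives equality of characteristic polynomials, hence multiplicities) but not a genuinely different route.
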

\begin{proof}
The basis vectors of $V^J_M$ satisfy
\begin{equation}
J_\pm\ket{(j)\,J,M}=C_\pm(J,M)\ket{(j)\,J,M\pm 1},
\end{equation}
so that
\begin{equation}
J_\pm (V^J_M) \subseteq V^J_{M\pm 1},\quad
\begin{cases}
\ker J_+|_{V^J_M}=\lbrace \vec{0} \rbrace, \quad &\forall M<J\\
\ker J_-|_{V^J_M}=\lbrace \vec{0} \rbrace, \quad &\forall M>-J.
\end{cases}
\end{equation}
Since $J_\pm$ commutes with the Casimirs, given a $\cC_a$-eigenvector $\ket{\alpha_a}\in V^J_M$ with eigenvalue $\alpha_a\in \C$ one has
\begin{equation}
0\neq J_\pm\ket{\alpha_a}\in V^J_{M\pm 1},\quad \cC_a J_\pm \ket{\alpha_a}=J_\pm \cC_a\ket{\alpha_a}=\alpha_a J_\pm\ket{\alpha_a}
\end{equation}
whenever $V^J_{M\pm 1}$ is defined, so that each $V^J_M$ has the same eigenvalues. 
\end{proof}
The action of the Casimirs on the basis vectors of $V_J$
\begin{equation}
\ket{(j)\,J}:=\ket{(j)\,J,J},\quad j\in\Omega_J(\lambda,\gamma)
\end{equation}
is given by
\begin{subequations}
\label{eq:casimir_actions}
\begin{align}
\begin{split}
\cC_1\ket{(j)\,J}=&\left[J(J+1)\left(\half[\ii A + P_{\lambda,\rho}(j)]\right)-\Big(j(j+1)-\gamma(\gamma+1)\Big)\left(\half[\ii A - P_{\lambda,\rho}(j)]\right)\right]\ket{(j)\,J}\\
&+\half[P^+_{\lambda,\rho}(j)]\sqrt{J+j+\gamma+2}\sqrt{j+\gamma-J+1}\sqrt{J+j-\gamma+1}\sqrt{J-j+\gamma}\ket{(j+1)\,J}\\
&+\half[P^-_{\lambda,\rho}(j)]\sqrt{J+j+\gamma+1}\sqrt{j+\gamma-J}\sqrt{J+j-\gamma}\sqrt{J-j+\gamma+1}\ket{(j-1)\,J}
\end{split}
\\[0.8em]
\begin{split}
\cC_2\ket{(j)\,J}=&\left[\Big(J(J+1)-j(J+1)\Big)\Big(1 -\ii A P_{\lambda,\rho}(j)\Big)+\gamma(\gamma+1)\Big(1 +\ii A P_{\lambda,\rho}(j)\Big)+\lambda^2 +\rho^2 -1\right]\ket{(j)\,J}\\
&-\ii A P^+_{\lambda,\rho}(j)\sqrt{J+j+\gamma+2}\sqrt{j+\gamma-J+1}\sqrt{J+j-\gamma+1}\sqrt{J-j+\gamma}\ket{(j+1)\,J}\\
&-\ii A P^-_{\lambda,\rho}(j)\sqrt{J+j+\gamma+1}\sqrt{j+\gamma-J}\sqrt{J+j-\gamma}\sqrt{J-j+\gamma+1}\ket{(j-1)\,J},
\end{split}
\end{align}
\end{subequations}
where it is implicitly assumed that $\ket{(j)\,J}=\vec{0}$ if $j\not\in \Omega_J(\lambda,\gamma)$.
Note that the matrix form of each $\cC_a$ is \emph{tridiagonal} (see Appendix~\ref{app:tridiagonal}) and that, for the subdiagonal entries,
\begin{equation}
\braket{(j+1)\,J|\cC_a|(j)\,J}=0\quad\Leftrightarrow\quad j=J+\gamma=\max\Omega_J(\lambda,\gamma);
\end{equation}
it follows from Proposition~\ref{prop:tridiagonal} that the eigenspaces of $\cC_a$ are all $1$-dimensional, so that it is diagonalisable if and only if it has $\dim V_J$ \emph{distinct} eigenvalues.
Explicitly, the Casimirs are simultaneously diagonalisable on $V_J$ if and only if there is a basis
\begin{equation}
\ket{(\Lambda,\Rho)\,J}=\sum_{j\in\Omega_J}\mathrm{A}\lbrace\gamma_A;(\lambda,\rho)\,j|(\Lambda,\Rho)\,J\rbrace\ket{(j)\,J},\quad (\Lambda,\Rho)\in\cC_J(\lambda,\rho,\gamma,A)\subseteq \C^2,
\end{equation}
with
\begin{equation}
|\cC_J(\lambda,\rho,\gamma,A)|=\dim V_J,
\end{equation}
such that
\begin{subequations}
\label{eq:lambda_rho_eigenvectors}
\begin{align}
\cC_1\ket{(\Lambda,\Rho)\,J}&=i\Lambda\Rho\ket{(\Lambda,\Rho)\,J}\\
\cC_2\ket{(\Lambda,\Rho)\,J}&=(\Lambda^2+\Rho^2-1)\ket{(\Lambda,\Rho)\,J}
\end{align}
\end{subequations}
and for every $(\Lambda,\Rho)$, $(\Lambda',\Rho')\in \cC_J(\lambda,\rho,\gamma,A)$
\begin{equation}
\begin{cases}
\Lambda\Rho=\Lambda'\Rho'\\
\Lambda^2+\Rho^2=(\Lambda')^2+(\Rho')^2
\end{cases}
\quad \Leftrightarrow \quad
(\Lambda',\Rho')=(\Lambda,\Rho);
\end{equation}
note that at this stage $\Lambda$ is allowed to be any complex number, to ensure that any pair of eigenvalues of the Casimirs can be written as in~\eqref{eq:lambda_rho_eigenvectors}.
The coefficients of the change of basis $\mathrm{A}\lbrace\gamma_A;(\lambda,\rho)\,j|(\Lambda,\Rho)\,J\rbrace$ will be called \emph{\CG\ coefficients}, in analogy with $\su(2)$ representation theory.
Conversely, one has the inverse change of basis
\begin{equation}
\ket{(j)\,J}=\sum_{(\Lambda,\Rho)\in\cC_J}\mathrm{B}\lbrace (\Lambda,\Rho)\,J|\gamma_A;(\lambda,\rho)\,j \rbrace\ket{(\Lambda,\Rho)\,J},\quad j\in\Omega_J(\lambda,\rho),
\end{equation}
where the $\mathrm{B}\lbrace (\Lambda,\Rho)\,J|\gamma_A;(\lambda,\rho)\,j \rbrace$ will be called \emph{inverse \CG\ coefficients}.
As a consequence of Proposition~\ref{prop:JM}, the eigenvectors in $V^J_M$, $M<J$ will be
\begin{equation}
\ket{(\Lambda,\Rho)\,J,M}:=\sum_{j\in\Omega_J}\mathrm{A}\lbrace\gamma_A;(\lambda,\rho)\,j|(\Lambda,\Rho)\,J\rbrace\ket{(j)\,J,M}
\end{equation}
so that, more generally,
\begin{equation}
\ket{(\Lambda,\Rho)\,J,M}=\sum_{j\in\Omega_J}\sum_{\mu\in\cM_\gamma}\sum_{m\in\cM_j}
\mathrm{A}\lbrace\gamma_A;(\lambda,\rho)\,j|(\Lambda,\Rho)\,J\rbrace
\braket{\gamma,\mu;j,m|J,M}\ket{\gamma_A,\mu}\otimes\ket{(\lambda,\rho)\,j,m}.
\end{equation}

Solving the eigenvalue equations for arbitrary $\gamma$ is not an easy task: instead, we will solve explicitly the case $\gamma=\half$ and proceed by induction for the other cases.
When $\gamma=\half$, one has
\begin{equation}
\mathcal{J}(\lambda,\half)=
\begin{cases}
\half+\N_0&\casesif\lambda=0\\
|\lambda|-\half+\N_0&\casesif\lambda\neq0
\end{cases}
\quad\textnormal{and}\quad \dim V_J=
\begin{cases}
1&\casesif J=|\lambda|-\half\\
2&\casesif J\geq|\lambda|+\half,
\end{cases}
\end{equation}
and it can be explicitly checked that, when $\lambda\neq-A\rho$,
\begin{equation}
\cC_J(\lambda,\rho,\half,A)=
\begin{cases}
\big\lbrace (\lambda-\half,\rho-\half[A]),(\lambda+\half,\rho+\half[A]) \big\rbrace \subseteq\Z/2 \times \C &\casesif J\geq|\lambda|+\half\\
\big\lbrace (\lambda-\half\sgn(\lambda),\rho-\half[A]\sgn(\lambda)) \big\rbrace \subseteq\Z/2 \times \C  &\casesif J=|\lambda|-\half;
\end{cases}
\end{equation}
the corresponding \CG\ coefficients can be found in Table~\ref{tab:1/2} (Appendix~\ref{app:CG}). When $\rho=-A\lambda$ the eigenvalues for $J\geq |\lambda|+\half$ coincide, so that, as pointed out earlier, the Casimirs cannot be diagonalised.

As the eigenvalues do not depend on $J$ and
\begin{equation}
\cC_J(\lambda,\rho,\half,A)\subseteq\cC_{J+1}(\lambda,\rho,\half,A),\quad \forall J\in\cJ(\lambda,\gamma),
\end{equation}
the eigenvectors can be extended to an eigenbasis 
\begin{equation}
\ket{(\Lambda,\Rho)\,J,M},\quad (\Lambda,\Rho)=(\lambda\pm\half,\rho\pm\half[A]),\quad J\in|\Lambda|+\N_0,\quad M\in \cM_J
\end{equation}
of the whole product space, where~\eqref{eq:sigma_decomposition} ensures that the counting is consistent. One can check that, for all eigenvalue pairs,
\begin{equation}
\Rho\not\in\pm(|\Lambda|+\N),
\end{equation}
so that $F^A_{\shalf}\otimes V_{\lambda,\rho}$ splits in two infinite-dimensional irreducible modules
\begin{equation}
V_{\Lambda,\Rho},\quad (\Lambda,\Rho)=(\lambda\pm\half,\rho\pm\half[A]).
\end{equation}
These modules are \emph{never} both unitary: a list of the possible pairs $(\lambda,\rho)$ such that there  is \emph{one} unitary module in the decomposition can be found in Table~\ref{tab:LR_unitary}. Notice that there are, up to isomorphisms, only two unitary modules that coupled with $F^A_{\shalf}$ have a unitary one in the decomposition.
\begin{table}[!ht]
\centering
\begin{tabular}{lcccccc}
\toprule[\lightrulewidth]
\toprule[\lightrulewidth]
& $\qquad\qquad$ & $\lambda$ & $\quad$ & $\rho$ & $\quad$ & $V_{\lambda,\rho}$ unitary if  \\
\midrule
principal series && any && $\pm \half + \ii\R$ && $(\lambda,\rho)=(0,\pm\half)$\\
\addlinespace[0.8em]
complementary series && $\pm\half$ && $\sgn(\lambda)\half[A]+(-1,0)\cup(0,1)$ && $(\lambda,\rho)=(\pm\half,0)$ \\
\bottomrule[\lightrulewidth]
\bottomrule[\lightrulewidth]
\end{tabular}
\caption{The possible pairs $(\lambda,\rho)$ such that one $V_{\Lambda,\Rho}$ is unitary (principal or complementary series).}
\label{tab:LR_unitary}
\end{table}

A generalisation of the case $\gamma=\half$ to arbitrary $\gamma$ is provided by the following Propositions:
\begin{lemma}
\label{lem:no_double_eigenvalues}
Let $\gamma\in\N/2$, $A=\pm1$ and $(\lambda,\rho)\in\Z/2\times\C$, with $\rho\not\in\pm(|\lambda|+\N)$. Then
\begin{equation*}
\begin{cases}
\ii(\lambda+\mu)(\rho+A\mu)\neq \ii(\lambda+\nu)(\rho+A\nu)\\[0.5em]
(\lambda+\mu)^2+(\rho+A\mu)^2-1\neq (\lambda+\nu)^2+(\rho+A\nu)^2-1
\end{cases}
\quad \forall \mu\neq\nu\in\cM_\gamma
\end{equation*}
if and only if $\rho+A\lambda\not\in(-2\gamma,2\gamma)\cap\Z$.
\end{lemma}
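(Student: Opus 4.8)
\section*{Proof proposal}

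The plan is to reduce both displayed inequalities to a single elementary condition on $\lambda+A\rho$ and then to settle a small counting problem about sums of elements of $\cM_\gamma$.

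I would first carry out a direct computation. For a fixed pair of distinct $\mu,\nu\in\cM_\gamma$, expanding and factoring out $\mu-\nu$ (using only $A^2=1$) gives
\[
(\lambda+\mu)(\rho+A\mu)-(\lambda+\nu)(\rho+A\nu)=A\,(\mu-\nu)\,\bigl(\lambda+A\rho+\mu+\nu\bigr)
\]
and
\[
\bigl[(\lambda+\mu)^2+(\rho+A\mu)^2\bigr]-\bigl[(\lambda+\nu)^2+(\rho+A\nu)^2\bigr]=2\,(\mu-\nu)\,\bigl(\lambda+A\rho+\mu+\nu\bigr).
\]
Since $\mu\neq\nu$ and $A\neq0$, this shows at once that the two inequalities are equivalent to one another, and that both fail for the given pair exactly when $\mu+\nu=-(\lambda+A\rho)$. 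Hence the universally quantified condition in the statement holds if and only if $-(\lambda+A\rho)$ is not of the form $\mu+\nu$ for any distinct $\mu,\nu\in\cM_\gamma$.

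Next I would identify the set $S_\gamma:=\{\mu+\nu\setst\mu,\nu\in\cM_\gamma,\ \mu\neq\nu\}$. Since the elements of $\cM_\gamma$ are either all integers or all half-odd-integers (according as $\gamma$ is an integer or a half-odd-integer), every such sum is an integer; the smallest is $(-\gamma)+(-\gamma+1)=1-2\gamma$, the largest is $\gamma+(\gamma-1)=2\gamma-1$, and every integer $k$ strictly between the extremes is attained---e.g.\ by taking $\mu=\gamma$, $\nu=k-\gamma$ when $k\geq0$ and symmetrically when $k<0$, the second summand always lying in $\cM_\gamma$ and differing from the first. Therefore $S_\gamma=\{1-2\gamma,\dotsc,2\gamma-1\}=(-2\gamma,2\gamma)\cap\Z$.

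Combining the two steps, the condition holds iff $-(\lambda+A\rho)\notin(-2\gamma,2\gamma)\cap\Z$; since $\Z$ and the interval $(-2\gamma,2\gamma)$ are invariant under multiplication by $\pm1$ and $A(\lambda+A\rho)=\rho+A\lambda$, this is exactly $\rho+A\lambda\notin(-2\gamma,2\gamma)\cap\Z$, as claimed. I do not expect a genuine obstacle; the only points needing some care are the two algebraic identities of the first step---the key observation being that the $\cC_1$ and $\cC_2$ conditions both degenerate to the same linear equation $\mu+\nu=-(\lambda+A\rho)$---and the elementary verification that $S_\gamma$ fills the whole integer interval. Note that the hypothesis $\rho\notin\pm(|\lambda|+\N)$ is not used in the equivalence itself; it is recorded only because the lemma is meant to be applied to the infinite-dimensional modules $V_{\lambda,\rho}$.
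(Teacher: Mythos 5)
Your proposal is correct and follows essentially the same route as the paper: factor the difference of eigenvalues as $(\mu-\nu)(\lambda+A\rho+\mu+\nu)$, observe that the sums $\mu+\nu$ of distinct elements of $\cM_\gamma$ exhaust $(-2\gamma,2\gamma)\cap\Z$, and use the symmetry of that set to pass from $\lambda+A\rho$ to $\rho+A\lambda$. Your additional remarks (the equivalence of the two inequalities and the non-use of $\rho\notin\pm(|\lambda|+\N)$) are accurate and only make explicit what the paper leaves implicit.
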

\begin{proof}
Let $\mu\neq\nu\in\cM_\gamma$. The statement reduces to
\begin{equation}
\mu^2+\mu(\lambda+A\rho)\neq \nu^2+\nu(\lambda+A\rho),
\end{equation}
which is equivalent to
\begin{equation}
\label{eq:no_double_eigenvalues}
(\mu-\nu)(\lambda+A\rho+\mu+\nu)\neq 0\quad \Leftrightarrow \quad \lambda+A\rho+\mu+\nu\neq0.
\end{equation}
The possible values the sum $\mu+\nu$ can take are
\begin{equation}
\{\mu+\nu \setst \mu\neq\nu\in \cM_\gamma \}\equiv \left\{-2\gamma+1,-2\gamma+2,\dotsc,2\gamma-2,2\gamma-1 \right\},
\end{equation}
so that~\eqref{eq:no_double_eigenvalues} is true if and only if $\rho+A\lambda\not\in(-2\gamma,2\gamma)\cap \Z$.
\end{proof}
\begin{proposition}
\label{prop:coupling}
Consider the product $F^A_\gamma\otimes V_{\lambda,\rho}$, with $\gamma\geq\half$ and $V_{\lambda,\rho}$ infinite-dimensional. When $\rho+A\lambda\not\in(-2\gamma,2\gamma)\cap \Z$ the Casimirs are simultanously diagonalisable, with
\begin{equation*}
(\Lambda,\Rho)\in\lbrace (\lambda+\nu,\rho+A\nu)\setst \nu\in\cM_\gamma\rbrace.
\end{equation*}
\end{proposition}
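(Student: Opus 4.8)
I would prove this by induction on $\gamma\in\N/2$ in steps of $\tfrac12$. The base case $\gamma=\tfrac12$ is exactly the explicit computation carried out above; recall that that computation in fact establishes the stronger statement $F^A_{1/2}\otimes V_{\lambda,\rho}\cong V_{\lambda+1/2,\rho+A/2}\oplus V_{\lambda-1/2,\rho-A/2}$ with \emph{both} summands again infinite-dimensional, whenever $V_{\lambda,\rho}$ is infinite-dimensional and $\lambda\neq -A\rho$ (equivalently $\rho+A\lambda\not\in(-1,1)\cap\Z$). It is this stronger form that I would carry as the inductive hypothesis: for a given $\gamma$, whenever $V_{\lambda,\rho}$ is infinite-dimensional and $\rho+A\lambda\notin(-2\gamma,2\gamma)\cap\Z$ one has $F^A_\gamma\otimes V_{\lambda,\rho}\cong\bigoplus_{\nu\in\cM_\gamma}V_{\lambda+\nu,\rho+A\nu}$ with all summands infinite-dimensional.

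For the step $\gamma\to\gamma+\tfrac12$, assume the hypothesis $\rho+A\lambda\notin(-2\gamma-1,2\gamma+1)\cap\Z$; it implies $\rho+A\lambda\notin(-2\gamma,2\gamma)\cap\Z$, so the inductive hypothesis applies to $F^A_\gamma\otimes V_{\lambda,\rho}$. The structural input is the recoupling $F^A_{1/2}\otimes F^A_\gamma\cong F^A_{\gamma+1/2}\oplus F^A_{\gamma-1/2}$, which holds already at the level of finite-dimensional Lorentz modules since all three are of type $(j,0)$ (if $A=-1$) or $(0,j)$ (if $A=+1$) and the finite-dimensional product rule produces exactly this splitting; consequently $F^A_{\gamma+1/2}\otimes V_{\lambda,\rho}$ is a $(\mathfrak{g},K)$-module direct summand of $X:=F^A_{1/2}\otimes\big(F^A_\gamma\otimes V_{\lambda,\rho}\big)$. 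Expanding the inner product by the inductive hypothesis and then applying the base case to each factor $F^A_{1/2}\otimes V_{\lambda+\nu,\rho+A\nu}$ — legitimate because the inductive hypothesis guarantees $V_{\lambda+\nu,\rho+A\nu}$ is infinite-dimensional, and $\rho+A\lambda\notin(-2\gamma-1,2\gamma+1)\cap\Z$ forces $\lambda+\nu\neq -A(\rho+A\nu)$ for every $\nu\in\cM_\gamma$ — gives
\[
X\cong\bigoplus_{\nu\in\cM_\gamma}\big(V_{\lambda+\nu+1/2,\,\rho+A(\nu+1/2)}\oplus V_{\lambda+\nu-1/2,\,\rho+A(\nu-1/2)}\big),
\]
all summands infinite-dimensional. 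A multiset count of the type in~\eqref{eq:sigma_decomposition} reindexes the right-hand side as $\bigoplus_{\tau\in\cM_{\gamma+1/2}}V_{\lambda+\tau,\rho+A\tau}\;\oplus\;\bigoplus_{\tau\in\cM_{\gamma-1/2}}V_{\lambda+\tau,\rho+A\tau}$; in particular the Casimirs on $X$ are simultaneously diagonalisable, scalar on each summand, with eigenvalue pairs among $\{(\lambda+\tau,\rho+A\tau)\setst\tau\in\cM_{\gamma+1/2}\}$.

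It remains to transfer this to the summand $F^A_{\gamma+1/2}\otimes V_{\lambda,\rho}$. Being a $(\mathfrak{g},K)$-submodule, it decomposes compatibly into its finite-dimensional $J_0$-$J^2$ eigenspaces $V^J_M$, each of which is a $\cC_1$-$\cC_2$-invariant subspace of the corresponding (finite-dimensional) eigenspace of $X$; since the Casimirs are simultaneously diagonalisable there, they are simultaneously diagonalisable on each $V^J_M$ of the summand, hence on $F^A_{\gamma+1/2}\otimes V_{\lambda,\rho}$, with eigenvalue pairs contained in $\{(\lambda+\tau,\rho+A\tau)\setst\tau\in\cM_{\gamma+1/2}\}$. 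That all of these pairs actually occur follows by taking $J$ large enough that $\dim V_J=2(\gamma+\tfrac12)+1$: by~\eqref{eq:casimir_actions} the Casimirs are tridiagonal on $V_J$ with nonvanishing subdiagonal, so by Proposition~\ref{prop:tridiagonal} their eigenspaces are one-dimensional, and being diagonalisable they possess $\dim V_J=|\cM_{\gamma+1/2}|$ distinct eigenvalue pairs; Lemma~\ref{lem:no_double_eigenvalues} applied with $\gamma+\tfrac12$ in place of $\gamma$ — which is precisely where the excluded range is needed — shows the $|\cM_{\gamma+1/2}|$ candidate pairs are pairwise distinct, so they must exhaust $\cC_J$. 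This closes the induction, and the full decomposition into irreducibles $\bigoplus_{\tau\in\cM_\gamma}V_{\lambda+\tau,\rho+A\tau}$ comes out along the way.

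The main obstacle is the module-theoretic bookkeeping I have glossed over with ``as one checks'' and ``multiset count'': confirming the recoupling $F^A_{1/2}\otimes F^A_\gamma\cong F^A_{\gamma+1/2}\oplus F^A_{\gamma-1/2}$ at the Lorentz level, the reindexing of the $2(2\gamma+1)$ pairs $\{\nu\pm\tfrac12\}$ onto $\cM_{\gamma+1/2}\uplus\cM_{\gamma-1/2}$, and above all keeping track of infinite-dimensionality of the intermediate modules $V_{\lambda+\nu,\rho+A\nu}$ (the condition $\rho+A\nu\notin\pm(|\lambda+\nu|+\N)$). That last point is a short but fiddly case analysis on the signs of $\lambda+\nu$ and $\rho+A\nu$ in which one plays the infinite-dimensionality of $V_{\lambda,\rho}$ (that is, $\rho\notin\pm(|\lambda|+\N)$) against the arithmetic constraint $\rho+A\lambda\notin(-2\gamma,2\gamma)\cap\Z$; the pairs $(\lambda,\rho)$ for which these two conditions would clash are exactly those removed by the excluded range, which also explains why it appears in the statement.
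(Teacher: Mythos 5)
Your argument is correct, but it runs along a genuinely different track than the paper's. Both proofs share the same skeleton — induction in half-integer steps with the explicit $\gamma=\half$ coupling as base case, associativity of $F^A_{\shalf}\otimes F^A_{\gamma}\otimes V_{\lambda,\rho}$, Lemma~\ref{lem:no_double_eigenvalues} for distinctness, and Proposition~\ref{prop:tridiagonal} for one-dimensionality of eigenspaces — but the executions diverge. The paper stays inside $F^A_\gamma\otimes V_{\lambda,\rho}$: it takes the stretched vector $\ket{(J-\gamma)\,J}$ for $J\geq|\lambda|+\gamma$, re-expands it through the inner bracketing using the inductive \CG\ coefficients together with the recursion of Appendix~\ref{app:CG}, assembles from the doubled labels an explicit set of $2\gamma+1$ eigenvectors, shows they span $V_J$ via the tridiagonal structure of $\cC_1$, and then reaches the small-$J$ eigenspaces by descending with $K_-$. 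You instead decompose the whole triple product $X=F^A_{\shalf}\otimes(F^A_\gamma\otimes V_{\lambda,\rho})$ in the other bracketing and transfer simultaneous diagonalisability to the direct summand $F^A_{\gamma+\shalf}\otimes V_{\lambda,\rho}$ by restricting commuting diagonalisable operators to the invariant subspaces $V^J_M$, closing with the counting argument at large $J$; this avoids the Appendix~\ref{app:CG} recursion and the separate $K_-$ descent (your restriction argument covers all $J$ at once), at the price of carrying a \emph{strictly stronger} inductive statement — the full decomposition $F^A_\gamma\otimes V_{\lambda,\rho}\cong\bigoplus_{\nu\in\cM_\gamma}V_{\lambda+\nu,\rho+A\nu}$ with infinite-dimensional summands — since otherwise the base case could not be applied to the factors $F^A_{\shalf}\otimes V_{\lambda+\nu,\rho+A\nu}$ at the next step. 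The two points you flag do check out: to recover the stronger statement at level $\gamma+\half$ you should invoke that a submodule of the semisimple module $X$ is again a direct sum of irreducibles isomorphic to constituents of $X$ (then the large-$J$ count fixes each multiplicity at one); and the infinite-dimensionality bookkeeping is genuinely short, because $V_{\lambda+\nu,\rho+A\nu}$ finite-dimensional forces $\rho$ real, $\rho+A\lambda\in\Z$ and, by infinite-dimensionality of $V_{\lambda,\rho}$, $|\rho|\leq|\lambda|$, after which $|\rho+A\nu|\geq|\lambda+\nu|+1$ with $|\nu|\leq\gamma$ is incompatible with $|\rho+A\lambda|\geq2\gamma$. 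What the paper's heavier route buys is the explicit \CG\ coefficients and normalisations, which are what the \WE\ theorem and the \JS\ construction later consume; your route buys a cleaner existence proof of the decomposition with less explicit computation.
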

\begin{proof}
The proof proceeds by induction on $\gamma\in\N/2$. Assume that the statement is true for $\gamma-\half$, and consider the product $F^A_\gamma\otimes V_{\lambda,\rho}$, $\gamma>\half$.

It is known from $\su(2)$ representation theory that
\begin{equation}
F^A_{\shalf}\otimes F^A_{\gamma-\shalf}=F^A_{\gamma-1}\oplus F^A_\gamma,
\end{equation}
so that
\begin{equation}
\ket{\gamma_A,\mu}\equiv \sum_{\sigma\in\cM_{\shalf}}\sum_{\tau\in\cM_{\gamma-\shalf}}\braket{\half,\sigma;\gamma-\half,\tau|\gamma,\mu}\ket{\half_A,\sigma}\otimes\ket{(\gamma-\half)_A,\tau};
\end{equation}
in particular
\begin{equation}
\label{eq:1/2+gamma-1/2}
\ket{\gamma_A,\gamma}=\ket{\half_A,\half}\otimes\ket{(\gamma-\half)_A,\gamma-\half}.
\end{equation}
Consider now the $J^2$-eigenspace $V_J$, $J\geq |\lambda|+\gamma$,
so that $J-\gamma\in\Omega_J(\lambda,\gamma)$ and the vector
\begin{equation}
\ket{(J-\gamma)\,J}=\ket{\gamma_A,\gamma}\otimes\ket{(\lambda,\rho)\, J-\gamma, J-\gamma}
\end{equation}
exists. Using~\eqref{eq:1/2+gamma-1/2}, $\ket{(J-\gamma)\,J}$ can be rewritten as
\begin{equation}
\begin{split}
\ket{(J-\gamma)\,J}&=\ket{\half_A,\half}\otimes\Big( \ket{(\gamma-\half)_A,\gamma-\half}\otimes\ket{(\lambda,\rho)\, J-\gamma, J-\gamma}\Big)\\
&=\sum_{\tau\in\cM_{\gamma-\shalf}}
\mathrm{B}\lbrace (\lambda+\tau,\rho+A\tau)\,J-\half | (\gamma-\half)_A;(\lambda,\rho)\,J-\gamma \rbrace
\ket{\half_A,\half}\otimes\ket{(\lambda+\tau,\rho+A\tau)\,J-\half},
\end{split}
\end{equation}
where the inductive hypothesis and the fact that
\begin{equation}
\rho+A\lambda\not\in(-2\gamma,2\gamma)\cap \Z\quad \Rightarrow \quad \rho+A\lambda\not\in(-2\gamma+1,2\gamma-1)\cap \Z
\end{equation}
were used.
Since in particular $\rho\neq-A\lambda$, the results of the case $\gamma=\half$ can be used, so that 
\begin{multline}
\label{eq:decomposition 1}
\ket{(J-\gamma)\,J}=\sum_{\sigma\in\cM_{\shalf}}
\sum_{\tau\in\cM_{\gamma-\shalf}}
\mathrm{B}\lbrace (\lambda+\tau+\sigma,\rho+A\tau+A\sigma)\,J | \half_A;(\lambda+\tau,\rho+A\tau)\,J-\half \rbrace\\
\times\mathrm{B}\lbrace (\lambda+\tau,\rho+A\tau)\,J-\half | (\gamma-\half)_A;(\lambda,\rho)\,J-\gamma \rbrace \ket{[\lambda+\tau](\lambda+\tau+\sigma,\rho+A\tau+A\sigma)\,J},
\end{multline}
where $[\sigma]$ keeps track of the fact that $(\lambda+\tau+\sigma,\rho+A\tau+A\sigma)$ comes from $(\lambda+\tau,\rho+A\tau)$. There are 
exactly $4\gamma$ (independent) vectors on the rhs of~\eqref{eq:decomposition 1}, namely
\begin{equation}
\begin{cases}
\ket{[+\half](\lambda+\gamma,\rho+A\gamma)\,J}\\
\ket{[-\half](\lambda+\nu,\rho+A\nu)\,J}\quad\textnormal{and}\quad\ket{[+\half](\lambda+\nu,\rho+A\nu)\,J},\quad \nu \in \cM_{\gamma-1}\\
\ket{[-\half](\lambda-\gamma,\rho-A\gamma)\,J},
\end{cases}
\end{equation}
with $2\gamma+1\equiv \dim V_J$ distinct eigenvalue pairs (see Lemma~\ref{lem:no_double_eigenvalues}).

As shown in Appendix~\ref{app:CG}, when $J\geq |\lambda|+\gamma$ the \CG\ coefficients satisfy
\begin{equation}
\frac{\mathrm{B}\lbrace(\Lambda+1,\Rho+A)\,J-\half|(\gamma-\half)_A;(\lambda,\rho)\,J-\gamma\rbrace}{\mathrm{B}\lbrace(\Lambda,\Rho)\,J-\half|(\gamma-\half)_A;(\lambda,\rho)\,J-\gamma\rbrace}=
\alpha(\Lambda,\Rho)\frac{\sqrt{J+\Lambda+\half}\sqrt{J+A\Rho+\half}}{\sqrt{J-\Lambda-\half}\sqrt{J-A\Rho-\half}},
\end{equation}
where $\alpha$ is fixed by the normalisation convention and is independent of $J$. Using this formula and the $\gamma=\half$ \CG\ coefficients from Table~\ref{tab:1/2}
\begin{equation}
\mathrm{B}\lbrace (\Lambda+\sigma,\Rho+A\sigma)\,J | \half_A;(\Lambda,\Rho)\,J-\half \rbrace=
\begin{cases}
\ii A\frac{\sqrt{J-\Lambda+\shalf}\sqrt{J-A\Rho+\shalf}}{\sqrt{2J+1}\sqrt{\Lambda+A\Rho}} & \casesif \sigma=-\half
\\
\frac{\sqrt{J+\Lambda+\shalf}\sqrt{J+A\Rho+\shalf}}{\sqrt{2J+1}\sqrt{\Lambda+A\Rho}} & \casesif \sigma=+\half,
\end{cases}
\end{equation}
it is possible to write
\begin{equation}
\ket{(J-\gamma)\,J}=\sum_{\nu\in\cM_\gamma}\mathrm{B}\lbrace (\lambda+\nu,\rho+A\nu)\,J | \gamma_A; (\lambda,\rho)\,J-\gamma \rbrace \ket{(\lambda+\nu,\rho+A\nu)\,J},
\end{equation}
where the vectors on the rhs are defined (up to a normalisation factor) as
\begin{equation}
\ket{(\Lambda,\Rho)\,J}:=
\begin{cases}
\ket{[+\half](\Lambda,\Rho)\,J} & \casesif (\Lambda,\Rho)=(\lambda+\gamma,\rho+A\gamma)\\
\ket{[-\half](\Lambda,\Rho)\,J} & \casesif (\Lambda,\Rho)=(\lambda-\gamma,\rho-A\gamma)\\
\frac{1}{\sqrt{\Lambda+A\Rho-1}}\ket{[+\half](\Lambda,\Rho)\,J}+\ii A \frac{\alpha(\Lambda-\shalf,\Rho-\shalf[A])}{\sqrt{\Lambda+A\Rho+1}} \ket{[-\half](\Lambda,\Rho)\,J}& \quad \textnormal{otherwise}.
\end{cases}
\end{equation}
As these vectors live in different $(\cC_1,\cC_2)$-eigenspaces, they are necessarily independent. Moreover, they form a basis of $V_J$: in fact, we know from~\eqref{eq:casimir_actions} that
\begin{equation}
\cC_1\ket{(j)\, J}\in\Span\lbrace \ket{(j-1)\, J},\ket{(j)\, J},\ket{(j+1)\, J} \rbrace,
\end{equation}
with
\begin{equation}
\braket{(j-1)\, J|\cC_1|(j)\, J}=0\quad\Leftrightarrow\quad j=J+\gamma,
\end{equation}
so that
\begin{equation}
\ket{(j+1)\, J}\in\Span\lbrace \ket{(j-1)\, J},\ket{(j)\, J},\cC_1\ket{(j)\, J} \rbrace.
\end{equation}
Since $\ket{(J-\gamma)\, J}$ is a linear combination of the $\ket{(\Lambda,\Rho)\, J}$ vectors, it follows recursively that
\begin{equation}
\ket{(j)\, J}\in\Span\lbrace \ket{(\lambda+\nu,\rho+A\nu)\, J}\setst 
\nu\in\cM_\gamma \rbrace,\quad \forall j\in\Omega_J(\lambda,\gamma).
\end{equation}

It follows from Proposition~\ref{prop:JM} that all the results obtained for $V_J$ hold for each $V^J_M$, $M\in\cM_J$. One can then extend them to every $J\in \cJ(\lambda,\gamma)$ by defining recursively (once the \CG\ coefficients  and the vectors have been appropriately normalised, so that the generators act on the $(\cC_1,\cC_2)$-eigenvectors as~\eqref{eq:K_actions})
\begin{equation}
\ket{(\Lambda,\Rho)\,J-1,J-1}\propto K_-\ket{(\Lambda,\Rho)\,J,J}-P_{\Lambda,\Rho}(J)\sqrt{2J}\ket{(\Lambda,\Rho)\,J,J-1}-P^+_{\Lambda,\Rho}(J)\sqrt{2}\ket{(\Lambda,\Rho)\,J+1,J-1},
\end{equation}
for $J\leq|\lambda|-\gamma$, which are trivially still eigenvectors. Since a basis of eigenvectors has been constructed for the whole product space, it follows that the Casimirs are diagonalisable.
\end{proof}

The results of Proposition~\ref{prop:coupling} does not apply when $\lambda=-A\rho$; in this case one has
\begin{proposition}
\label{prop:continuity}
Consider the product $F^A_\gamma\otimes V_{\lambda,\rho}$, with $\gamma\geq\half$ and $V_{\lambda,\rho}$ infinite-dimensional.
When $\rho+A\lambda\not\in(-2\gamma,2\gamma)\cap \Z$ the Casimirs are not diagonalisable on the product module.
\end{proposition}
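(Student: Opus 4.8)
The plan is to prove that, for infinite-dimensional $V_{\lambda,\rho}$, the Casimirs fail to be diagonalisable whenever $\rho+A\lambda\in(-2\gamma,2\gamma)\cap\Z$ --- the regime complementary to Proposition~\ref{prop:coupling}, which contains in particular the case $\lambda=-A\rho$ of the preceding remark --- by reducing the question to a single finite-dimensional block and then transporting the eigenvalue formula of Proposition~\ref{prop:coupling} into this regime through analytic continuation in $\rho$; this continuity is the source of the name. Fix $J\in\cJ(\lambda,\gamma)$ with $J\geq|\lambda|+\gamma$, so that $\Omega_J(\lambda,\gamma)=\lbrace J-\gamma,\dotsc,J+\gamma\rbrace$ and $\dim V_J=2\gamma+1$. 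Since $V_{\lambda,\rho}$ is infinite-dimensional, $\rho\notin\pm(|\lambda|+\N)$ and hence $P^+_{\lambda,\rho}(j)\neq 0$ for all $j\in|\lambda|+\N_0$; as observed after~\eqref{eq:casimir_actions}, Proposition~\ref{prop:tridiagonal} then shows that every eigenspace of $\cC_1|_{V_J}$ is $1$-dimensional, so $\cC_1|_{V_J}$ is diagonalisable if and only if its characteristic polynomial $D_J(x)$ has $2\gamma+1$ distinct roots. Since $V_J$ is a $\cC_1$-invariant subspace of $F^A_\gamma\otimes V_{\lambda,\rho}$, it will suffice to contradict this for a single such $J$.

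First I would record that $D_J(x)$, with $\lambda$, $\gamma$, $A$ and $J$ held fixed, is a polynomial in $x$ whose coefficients are polynomials in $\rho$. Indeed, expanding the tridiagonal determinant by the standard three-term recursion, these coefficients are built only from the diagonal entries $\braket{(j)\,J|\cC_1|(j)\,J}$, which are affine in $\rho$ through $P_{\lambda,\rho}(j)=\ii\lambda\rho/[j(j+1)]$, and from the products $\braket{(j+1)\,J|\cC_1|(j)\,J}\braket{(j)\,J|\cC_1|(j+1)\,J}$ of adjacent off-diagonal entries, each of which --- by~\eqref{eq:casimir_actions} and $P^+_{\lambda,\rho}(j)=P^-_{\lambda,\rho}(j+1)$ --- equals $\tfrac14\big[P^-_{\lambda,\rho}(j+1)\big]^2$ times a factor independent of $\rho$, with $\big[P^-_{\lambda,\rho}(j+1)\big]^2$ proportional to the quadratic $\big((j+1)^2-\lambda^2\big)\big((j+1)^2-\rho^2\big)$.

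Then I would invoke Proposition~\ref{prop:coupling}: for every $\rho$ with $V_{\lambda,\rho}$ infinite-dimensional and $\rho+A\lambda\notin(-2\gamma,2\gamma)\cap\Z$, the Casimirs are diagonalisable on $V_J$, with all of the $2\gamma+1$ pairs $(\lambda+\nu,\rho+A\nu)$, $\nu\in\cM_\gamma$, occurring there; their $\cC_1$-eigenvalues $\ii(\lambda+\nu)(\rho+A\nu)$ are pairwise distinct by Lemma~\ref{lem:no_double_eigenvalues} and number $\dim V_J$, so necessarily
\begin{equation*}
D_J(x)=\prod_{\nu\in\cM_\gamma}\big(x-\ii(\lambda+\nu)(\rho+A\nu)\big).
\end{equation*}
As both sides are polynomials in $\rho$ for each fixed $x$ and they coincide on the complement of a discrete subset of $\C$, this identity holds for every $\rho\in\C$. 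Now, as in the proof of Lemma~\ref{lem:no_double_eigenvalues}, $\ii(\lambda+\mu)(\rho+A\mu)=\ii(\lambda+\nu)(\rho+A\nu)$ holds for some $\mu\neq\nu\in\cM_\gamma$ exactly when $\rho+A\lambda\in(-2\gamma,2\gamma)\cap\Z$ --- which is the case here; concretely, $\nu=-\mu=\gamma$ works whenever $\lambda=-A\rho$, legitimately since $\gamma\geq\half$. Hence $D_J$ has a repeated root, so $\cC_1|_{V_J}$ has an eigenvalue of algebraic multiplicity at least $2$ but geometric multiplicity $1$ and is therefore not diagonalisable; consequently $\cC_1$ admits no eigenbasis on $F^A_\gamma\otimes V_{\lambda,\rho}$, and \emph{a fortiori} the two Casimirs are not simultaneously diagonalisable on the product module.

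The main obstacle I anticipate is the bookkeeping of the first step --- verifying that the three-term recursion genuinely makes $D_J(x)$ polynomial in $\rho$ in spite of the square roots carried by the individual off-diagonal entries of $\cC_1|_{V_J}$, and that Proposition~\ref{prop:coupling} may be read off on the single block $V_J$ with the full family of $2\gamma+1$ pairs present, which is precisely why the restriction $J\geq|\lambda|+\gamma$ is imposed. Once the polynomial identity for $D_J$ is in hand, the appeal to the identity theorem and the resulting collision of eigenvalues are routine.
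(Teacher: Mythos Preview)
Your proposal is correct and follows essentially the same route as the paper: restrict to a block $V_J$ with $J\geq|\lambda|+\gamma$, use Proposition~\ref{prop:coupling} to identify the characteristic polynomial of $\cC_1|_{V_J}$ as $\prod_{\nu\in\cM_\gamma}(x-\ii(\lambda+\nu)(\rho+A\nu))$ on the generic locus, extend this identity to the excluded values of $\rho$, and then combine Lemma~\ref{lem:no_double_eigenvalues} with the tridiagonal result (Proposition~\ref{prop:tridiagonal}) to conclude. The only difference is that the paper invokes mere continuity of the determinant in $\rho$ to pass to the limit, whereas you argue more carefully that the entries of the three-term recursion are polynomial in $\rho$ (via the squaring of $P^\pm_{\lambda,\rho}$) and appeal to the identity theorem; this is a refinement rather than a different strategy.
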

\begin{proof}
Consider the $J^2$-eigenspace $V_J$, $J\geq|\lambda|+\gamma$, and let
\begin{equation}
d(\lambda,\rho,k):=\det(\cC_1|_{V_J}-k\1),\quad k\in\C,
\end{equation}
which is continuous in $\rho\in \C$, as it is a product of continuous function of $\rho$ (see eq.~\eqref{eq:casimir_actions}). From Proposition~\ref{prop:coupling} we have that, for $\rho+A\lambda\not\in(-2\gamma,2\gamma)\cap \Z$,
\begin{equation}
d(\lambda,\rho,k)=\prod_{\nu\in\cM_\gamma}\big[\ii(\lambda+\nu)(\rho+A\nu)-k\big];
\end{equation}
it follows from continuity that, for each fixed $\lambda\in \Z/2$, $n\in \{-2\gamma+1,\dotsc,2\gamma-1\}$,
\begin{equation}
d(\lambda,-A\lambda+n,k)=\lim_{\rho\rightarrow -A\lambda+n}d(\lambda,\rho,k).
\end{equation}
From Lemma~\ref{lem:no_double_eigenvalues} we know that there are  \emph{at most} $2\gamma$ distinct eigenvalues in this case, while $\dim V_J=2\gamma+1$. As pointed out earlier, the matrix form of $\cC_a|_{V_J}$ satisfies the assumptions of Proposition~\ref{prop:tridiagonal}, so that it has at most $2\gamma$ eigenvectors, i.e., it is not diagonalisable on $V_J$ (and hence on the whole product space). 
\end{proof}

To summarize, a \CG\ decomposition of the product $F^A_\gamma\otimes V_{\lambda,\rho}$ is possible if and only if $\rho\neq-A\lambda$, with the modules in the decomposition having
\begin{equation}
(\Lambda,\Rho)\in\lbrace(\lambda+\nu,\rho+A\nu) \setst \nu\in\cM_\gamma\rbrace.
\end{equation}
Finally, the result for left and right modules can be generalised to arbitrary finite-dimensional ones in the following
\begin{corollary}
The Casimirs are simultaneously diagonalisable in $(\gamma_1,\gamma_2)\otimes V_{\lambda,\rho}$, with $\gamma_1,\gamma_2\geq \shalf$ and $V_{\lambda,\rho}$ infinite-dimensional, if and only if $\rho-\lambda\not\in(-2\gamma_1,2\gamma_1)\cap\Z$ and $\rho+\lambda\not\in(-2\gamma_2,2\gamma_2)\cap\Z$, with
\begin{equation*}
(\Lambda,\Rho)\in\lbrace(\lambda+\nu_1+\nu_2,\rho-\nu_1+\nu_2) \setst \nu_1\in \cM_{\gamma_1}, \nu_2\in \cM_{\gamma_2}\rbrace;
\end{equation*}
the eigenvalue pairs are not necessarily distinct.
\end{corollary}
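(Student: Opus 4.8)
The plan is to deduce the result from Propositions~\ref{prop:coupling} and~\ref{prop:continuity} by using the factorisation~\eqref{eq:finite_recoup}, $(\gamma_1,\gamma_2)\cong F^{-1}_{\gamma_1}\otimes F^{+1}_{\gamma_2}$ (with $F^{-1}_{\gamma_1}=(\gamma_1,0)$ and $F^{+1}_{\gamma_2}=(0,\gamma_2)$), together with associativity of the tensor product:
\[
(\gamma_1,\gamma_2)\otimes V_{\lambda,\rho}\;\cong\;F^{-1}_{\gamma_1}\otimes\bigl(F^{+1}_{\gamma_2}\otimes V_{\lambda,\rho}\bigr)\;\cong\;F^{+1}_{\gamma_2}\otimes\bigl(F^{-1}_{\gamma_1}\otimes V_{\lambda,\rho}\bigr).
\]
For the coupling with $F^{-1}_{\gamma_1}$ one has $A=-1$, so the exceptional set of Proposition~\ref{prop:coupling} is $\rho-\lambda\in(-2\gamma_1,2\gamma_1)\cap\Z$; for the coupling with $F^{+1}_{\gamma_2}$ one has $A=+1$, so it is $\rho+\lambda\in(-2\gamma_2,2\gamma_2)\cap\Z$. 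These are precisely the two conditions in the statement.

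For the ``if'' direction I would couple $F^{-1}_{\gamma_1}$ first. If $\rho-\lambda\notin(-2\gamma_1,2\gamma_1)\cap\Z$, Proposition~\ref{prop:coupling} gives $F^{-1}_{\gamma_1}\otimes V_{\lambda,\rho}\cong\bigoplus_{\nu_1\in\cM_{\gamma_1}}V_{\lambda+\nu_1,\rho-\nu_1}$, each summand being infinite-dimensional. The point that makes the iteration go through is that coupling $F^{+1}_{\gamma_2}$ to the $\nu_1$-th summand has exceptional set governed by $(\rho-\nu_1)+(\lambda+\nu_1)=\rho+\lambda$, which is \emph{independent of $\nu_1$}; hence if $\rho+\lambda\notin(-2\gamma_2,2\gamma_2)\cap\Z$ as well, Proposition~\ref{prop:coupling} applies simultaneously to every summand and yields
\[
(\gamma_1,\gamma_2)\otimes V_{\lambda,\rho}\;\cong\;\bigoplus_{\nu_1\in\cM_{\gamma_1}}\bigoplus_{\nu_2\in\cM_{\gamma_2}}V_{\lambda+\nu_1+\nu_2,\,\rho-\nu_1+\nu_2}.
\]
The Casimirs act as scalars on each irreducible summand, so they are simultaneously diagonalisable with the stated eigenvalue pairs; distinct pairs $(\nu_1,\nu_2)$ can nonetheless give the same eigenvalues of $(\cC_1,\cC_2)$, e.g.\ because $V_{\Lambda,\Rho}$ and $V_{\Rho,\Lambda}$ have the same central character.

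For the ``only if'' direction I would argue by contraposition; suppose $\rho+\lambda\in(-2\gamma_2,2\gamma_2)\cap\Z$ (the case $\rho-\lambda\in(-2\gamma_1,2\gamma_1)\cap\Z$ is symmetric, exchanging the two orderings above and the roles of $F^{-1}_{\gamma_1}$ and $F^{+1}_{\gamma_2}$). The module $W:=F^{-1}_{\gamma_1}\otimes V_{\lambda,\rho}$ is infinite-dimensional and of finite length (indeed $\dim V_J\le 2\gamma_1+1$ for every $J$), so it has an \emph{infinite-dimensional} irreducible composition factor $V_{\Lambda_0,\Rho_0}$. Every composition factor of $W$ has central character equal to that of some $V_{\lambda+\nu_1,\rho-\nu_1}$, $\nu_1\in\cM_{\gamma_1}$: this is immediate from Proposition~\ref{prop:coupling} when $\rho-\lambda\notin(-2\gamma_1,2\gamma_1)\cap\Z$, and otherwise it follows from the continuity-in-$\rho$ argument of Proposition~\ref{prop:continuity}, applied now to the (joint) characteristic polynomial of $\cC_1$ and $\cC_2$ on $V_J$ for large $J$, whose roots counted with multiplicity are therefore the limits of the generic ones. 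In particular $\Lambda_0+\Rho_0=\pm(\lambda+\rho)$, so $\Rho_0+\Lambda_0\in(-2\gamma_2,2\gamma_2)\cap\Z$. Since $V_{\Lambda_0,\Rho_0}$ is a sub-quotient of $W$, $F^{+1}_{\gamma_2}\otimes V_{\Lambda_0,\Rho_0}$ is a sub-quotient of $F^{+1}_{\gamma_2}\otimes W\cong(\gamma_1,\gamma_2)\otimes V_{\lambda,\rho}$; by Proposition~\ref{prop:continuity} the Casimirs are not simultaneously diagonalisable on $F^{+1}_{\gamma_2}\otimes V_{\Lambda_0,\Rho_0}$. Finally, because $\cC_1$ and $\cC_2$ are central, their joint generalised eigenspaces are submodules and the joint eigenspace projectors are polynomials in $\cC_1,\cC_2$; this makes simultaneous diagonalisability descend to every submodule and every quotient, hence to every sub-quotient. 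Therefore it fails on $(\gamma_1,\gamma_2)\otimes V_{\lambda,\rho}$.

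The step I expect to be the main obstacle is the ``only if'' direction in the \emph{doubly} exceptional case, $\rho-\lambda\in(-2\gamma_1,2\gamma_1)\cap\Z$ \emph{and} $\rho+\lambda\in(-2\gamma_2,2\gamma_2)\cap\Z$: there $W=F^{-1}_{\gamma_1}\otimes V_{\lambda,\rho}$ is not a direct sum of irreducibles, so its composition factors cannot simply be read off from Proposition~\ref{prop:coupling}, and one must control their central characters by the same continuity of the characteristic polynomial on $V_J$ used in Proposition~\ref{prop:continuity} and check that an infinite-dimensional factor with the required central character is present. The remaining ingredients---associativity of the tensor product, the $\nu$-independence of the second exceptional set, and the descent of simultaneous diagonalisability to sub-quotients---are routine.
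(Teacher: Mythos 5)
Your overall strategy coincides with the paper's: factor $(\gamma_1,\gamma_2)\cong F^{-}_{\gamma_1}\otimes F^{+}_{\gamma_2}$, couple the two factors one at a time using Propositions~\ref{prop:coupling} and~\ref{prop:continuity}, and exploit the fact that the second exceptional condition (governed by $\rho+\lambda$, respectively $\rho-\lambda$) does not depend on which intermediate summand $V_{\lambda+\nu,\rho\mp\nu}$ one has landed in; your ``if'' direction and your singly exceptional ``only if'' case are exactly the paper's cases 1 and 2, up to the order in which the two finite-dimensional factors are coupled. Where you genuinely diverge is the doubly exceptional case. The paper stays elementary there: following the construction in Proposition~\ref{prop:coupling} it exhibits an explicit irreducible submodule $V_{\lambda+\gamma_2,\rho+\gamma_2}\subseteq F^{+}_{\gamma_2}\otimes V_{\lambda,\rho}$, so that $F^{-}_{\gamma_1}\otimes V_{\lambda+\gamma_2,\rho+\gamma_2}$ is an honest submodule of $(\gamma_1,\gamma_2)\otimes V_{\lambda,\rho}$ to which Proposition~\ref{prop:continuity} applies, and non-diagonalisability then follows merely from restricting commuting diagonalisable operators to an invariant subspace --- no quotients, no composition series. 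You instead pass to a composition series of $F^{-}_{\gamma_1}\otimes V_{\lambda,\rho}$, control the central characters of its factors by the continuity-in-$\rho$ argument, and descend through sub-quotients using exactness of tensoring with $F^{+}_{\gamma_2}$. This works and is more robust (it would also localise the central characters of all factors, not just one), but it buys that generality with two ingredients the paper never establishes and that you should not treat as free: (i) finite length of $F^{-}_{\gamma_1}\otimes V_{\lambda,\rho}$ does not follow from the bound $\dim V_J\le 2\gamma_1+1$ alone --- one needs admissibility together with finite generation (a standard Harish-Chandra fact, but outside the paper's toolkit), or one can bypass the whole step by using the paper's explicit ``top'' submodule; and (ii) the claim that ``the joint eigenspace projectors are polynomials in $\cC_1,\cC_2$'' is not literally true on an infinite-dimensional module with infinitely many distinct eigenvalues, so the descent of simultaneous diagonalisability to submodules and quotients should be carried out inside each finite-dimensional, $K$-stable subspace $V^J_M$ (which is preserved by the Casimirs and by any sub-quotient), where the argument is correct. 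With these two points repaired --- or with the doubly exceptional case replaced by the paper's submodule construction --- your proof is sound and yields the same conclusion, including the observation that the eigenvalue pairs need not be distinct.
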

\begin{proof}
As already noted, one has $(\gamma_1,\gamma_2)\cong F^{-}_{\gamma_1}\otimes F^{+}_{\gamma_2}$, so that we may diagonalise the Casimirs in $F^{+}_{\gamma_2}\otimes V_{\lambda,\rho}$ first, and then, for each resulting eigenspace $V_{\lambda,\rho}$, in $F^{-}_{\gamma_1}\otimes V_{\Lambda,\Rho}$. We can distinguish $3$ cases:
\begin{enumerate}
\item if $\rho+\lambda\not\in(-2\gamma_2,2\gamma_2)\cap\Z$ the product $F^{+}_{\gamma_2}\otimes V_{\Lambda,\Rho}$ admits a decomposition. The second decomposition exists if and only if, for each $\nu\in\cM_{\gamma_2}$,
\begin{equation}
\rho+\nu-(\lambda+\nu)=\rho-\lambda\not\in(-2\gamma_1,2\gamma_1)\cap\Z.
\end{equation}
\item If $\rho+\lambda\in(-2\gamma_2,2\gamma_2)\cap\Z$ but $\rho-\lambda\not\in(-2\gamma_1,2\gamma_1)\cap\Z$  the product $F^{+}_{\gamma_2}\otimes V_{\lambda,\rho}$ is not decomposable, but one can  use the fact that $F^{-}_{\gamma_1}\otimes F^{+}_{\gamma_2}\cong F^{+}_{\gamma_2}\otimes F^{-}_{\gamma_1}$ and decompose the product $F^{-}_{\gamma_1}\otimes V_{\Lambda,\Rho}$ first. Following the same reasoning of the previous case, the product of each resulting submodule with $F^+_{\gamma_2}$ will not be decomposable, as for each $\nu\in\cM_{\gamma_1}$
\begin{equation}
\rho+\nu+(\lambda-\nu)=\rho+\lambda\in(-2\gamma_2,2\gamma_2)\cap\Z.
\end{equation}
\item Finally, if $\rho-\lambda\in(-2\gamma_1,2\gamma_1)\cap\Z$ and$\rho+\lambda\in(-2\gamma_2,2\gamma_2)\cap\Z$, both $F^{-}_{\gamma_1}\otimes V_{\Lambda,\Rho}$ and  $F^{+}_{\gamma_2}\otimes V_{\Lambda,\Rho}$ are non-decomposable. The only results we have are on the product of $F^A_\gamma$ with \emph{irreducible} modules, so we are not in a position to say anything in this case. However, one can check, following the same procedures of Proposition~\ref{prop:coupling}, that $F^+_{\gamma_2}\otimes V_{\lambda,\rho}$ has at least one submodule $V_{\lambda+\gamma_2,\rho+\gamma_2}$. Since the product $F^-_{\gamma_1}\otimes V_{\lambda+\gamma_2,\rho+\gamma_2}$ is not decomposable, the product $(\gamma_1,\gamma_2)\otimes V_{\lambda,\rho}\supseteq F^-_{\gamma_1}\otimes V_{\lambda+\gamma_2,\rho+\gamma_2}$ will be indecomposable as well.
\end{enumerate}
 The values of the Casimirs follow from Proposition~\ref{prop:coupling}, and it can be checked explicitly that they need not be all distinct: for example, when $\gamma_1=\gamma_2=1$, two possible pairs are $(\lambda,\rho)$ and $(\lambda-2,\rho)$, which are equivalent if $(\lambda,\rho)=(1,0)$.
\end{proof}
The \CG\ coefficients of the product $(\gamma_1,\gamma_2)\otimes V_{\lambda,\rho}$ will be denoted by
\begin{equation}
\mathrm{A}\{ (\gamma_1,\gamma_2)\,\mu_1,\mu_2;(\lambda,\rho)\,j,m | [\alpha] (\Lambda,\Rho)\, J,M\},
\end{equation}
where $\alpha$ keeps track of the multiplicity of $V_{\Lambda,\Rho}$ in the decomposition.
\section{Tensor Operators}\label{sec:tensor_operators}
This sections contains the main results of the paper, i.e., the generalisation of the \WE\ theorem to arbitrary groups and in particular to the infinite-dimensional representations of the Lorentz group.
The section is organized as follows: first tensor operators for group representations and $(\mathfrak{g},K)$-modules of arbitrary Lie groups are introduced, then \WE\ theorem will be presented. Lastly, as an application, the theorem will be used to generalise the \JS\ representation to infinite-dimensional $(\mathfrak{g},K)$-modules of the Lorentz group.
\subsection{Definition of tensor operators}\label{subsec:tensor_operators}

Tensor operators for a Lie group $G$ are, roughly speaking, a particular class of operators that transform as a vector in a representation of $G$ under the action of the group. Explicitly
\begin{definition}[Strong tensor operator]
Let $V_0$, $V$ and $V'$ be arbitrary (topological) $G$-modules of a Lie group $G$, with $V_0$ finite dimensional. A \emph{strong} tensor operator for $G$ is an intertwiner between $V_0\otimes V$ and $V'$, i.e. a \emph{continuous} linear map
\[
T:V_0\otimes V \rightarrow V'
\]
such that
\[
T\circ g = g \circ T,\quad \forall g\in G.
\]
If $V_0$ is irreducible $T$ is called an irreducible strong tensor operator.
\end{definition}
A weaker definition can be used when working with $(\mathfrak{g},K)$-modules instead of group representations: 
\begin{definition}[tensor operator]
Let $V_0$, $V$ and $V'$ be arbitrary $(\mathfrak{g},K)$-modules of a Lie group $G$, with $V_0$ finite dimensional. A tensor operator for $G$ is an intertwiner between $V_0\otimes V$ and $V'$, i.e., a linear map
\[
T:V_0\otimes V \rightarrow V'
\]
such that
\[
T \circ X= X\circ T,\quad \forall X\in \mathfrak{g}
\quad \textnormal{and} \quad
T \circ k = k \circ T ,\quad \forall k \in K,
\]
where $\mathfrak{g}$ and $K$ act on the product module as
\begin{align*}
X(v_0\otimes v)&=(X\,v_0)\otimes  v + v_0 \otimes (X\, v)\\
k(v_0\otimes v)&=(k\,v_0)\otimes (k\, v).
\end{align*}
If $V_0$ is irreducible $T$ is called an irreducible tensor operator.
\end{definition}
The two definitions of tensor operator are related to each other; one has, in  fact
\begin{proposition}
An intertwiner $T:V\rightarrow V'$ between $G$-modules is also an intertwiner between the corresponding $(\mathfrak{g},K)$-modules. As a consequence, a strong tensor operator is also a tensor operator.
\end{proposition}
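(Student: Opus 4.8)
The plan is to check directly the two equivariance conditions that define an intertwiner of $(\mathfrak{g},K)$-modules, starting from the definition of the module induced by a (topological, admissible) $G$-module. Recall that the $(\mathfrak{g},K)$-module associated to an admissible $G$-module $V$ has as underlying space the $K$-finite vectors of $V$ — which are automatically smooth, so that the Lie algebra action makes sense on them — with $K$ acting by restriction of the group action and $\mathfrak{g}$ acting by $Xv=\left.\tfrac{\eder}{\eder t}\right|_{t=0}\exp(tX)\,v$ for $X\in\mathfrak{g}$. So, given a continuous intertwiner $T\colon V\to V'$ of $G$-modules, I first have to see that it sends $K$-finite vectors to $K$-finite vectors, and then that it commutes with each of the two actions.

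First I would dispose of the $K$-part, which is elementary. If $v\in V$ is $K$-finite, the linear span $W$ of $\lbrace k\,v \setst k\in K\rbrace$ is finite dimensional; since $K\subseteq G$ we have $T\circ k=k\circ T$ for every $k\in K$, so $T(W)$ is contained in the span of $\lbrace k\,(Tv)\setst k\in K\rbrace$, which is therefore finite dimensional. Hence $Tv$ is $K$-finite, $T$ restricts to a linear map between the underlying spaces of the two $(\mathfrak{g},K)$-modules, and this restriction is $K$-equivariant simply because the hypothesis $T\circ g=g\circ T$ holds in particular for $g\in K$.

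Next I would treat the $\mathfrak{g}$-part, where continuity of $T$ enters essentially. For $X\in\mathfrak{g}$ and a $K$-finite (hence smooth) vector $v$, the curve $t\mapsto\exp(tX)v$ is smooth in $V$; since $T$ is linear and continuous, $t\mapsto T(\exp(tX)v)=\exp(tX)\,Tv$ is smooth in $V'$, so $Tv$ is a smooth vector and the difference quotients satisfy
\begin{equation}
T\!\left(\frac{\exp(tX)v-v}{t}\right)=\frac{\exp(tX)\,Tv-Tv}{t}.
\end{equation}
Letting $t\to 0$ and using continuity of $T$ to pass the limit through on the left, the left-hand side tends to $T(Xv)$ and the right-hand side tends to $X(Tv)$; uniqueness of limits gives $T(Xv)=X(Tv)$. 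Together with the previous paragraph this shows $T$ is an intertwiner of $(\mathfrak{g},K)$-modules.

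For the corollary, a strong tensor operator is a continuous intertwiner $T\colon V_0\otimes V\to V'$ of $G$-modules, and the $(\mathfrak{g},K)$-module induced on $V_0\otimes V$ is exactly the one whose actions, $X(v_0\otimes v)=(X v_0)\otimes v+v_0\otimes(X v)$ and $k(v_0\otimes v)=(k v_0)\otimes(k v)$, appear in the definition of a tensor operator — here finite-dimensionality of $V_0$ ensures the $K$-finite part of $V_0\otimes V$ is $V_0$ tensored with the $K$-finite part of $V$, and that the Leibniz formula reproduces the differentiated group action. Applying the first part of the proposition to this $T$ yields that it is a tensor operator. The only step that is not pure bookkeeping is the interchange of $T$ with the limit defining the $\mathfrak{g}$-action, and since $T$ is assumed continuous and linear this is immediate; I do not expect any genuine obstacle.
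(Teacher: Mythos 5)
Your proof is correct and takes essentially the same route as the paper: restrict to the $K$-finite vectors, use $T\circ k = k\circ T$ for $k\in K$ to see that $K$-finiteness is preserved, and use linearity together with continuity of $T$ to pass it through the limit defining the $\mathfrak{g}$-action. One minor phrasing point: in the $K$-finiteness step the containment that does the work is $\Span\lbrace k\,(Tv)\setst k\in K\rbrace = T(W)$, which is finite dimensional because $W$ is, rather than the direction $T(W)\subseteq\Span\lbrace k\,(Tv)\setst k\in K\rbrace$ as you wrote it.
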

\begin{proof}
The subspace $V_K\subseteq V$ of $K$-finite vectors, i.e., the set of all vectors $v$ such that $\Span\{k\,v,k\in K\}$  is finite-dimensional, is the $(\mathfrak{g},K)$-module associated to $V$, with
\begin{equation}
X\,v:=\left.\frac{d}{dt}\right|_{t=0}\exp(tX)v,\quad \forall X\in \mathfrak{g},\quad \forall v\in V_K.
\end{equation}
One has, since $T$ commutes with the action of $K\subseteq G$,
\begin{equation}
\dim\Span\{k T v \setst k\in K \}=\dim\Span\{ Tk v \setst k\in K \}=\dim T\left(\Span\{k v \setst k\in K \}\right)<\infty,\quad \forall v\in V_K,
\end{equation}
that is $T(V_K)\subseteq V'_K$. Moreover,
\begin{equation}
XTv=\left.\frac{d}{dt}\right|_{t=0}\exp(tX)Tv=\left.\frac{d}{dt}\right|_{t=0}T\exp(tX)v=T\left.\frac{d}{dt}\right|_{t=0}\exp(tX)v=TXv,\quad \forall X\in \mathfrak{g},\quad \forall v\in V_K,
\end{equation}
where the fact that $T$ is continuous was used. It follows that $T|_{V_K}$ is an intertwiner between the $(\mathfrak{g},K)$-modules $V_K$ and $V'_K$.
\end{proof}

It is often preferable to have operators between $V$ and $V'$: this can be achieved by defining the ``components'' of a (strong) tensor operator $T$ in a basis $e_i\in V_0$, $i\in I$ as
\begin{equation}
T_i:v\in V \mapsto T (e_i,v)\in V';
\end{equation}
the definitions of strong tensor operator and tensor operator become respectively
\begin{equation}
\label{eq:intertwiner_def_basis1}
gT_i g^{-1}=\sum_{j\in I}\braket{e^j,g\,e_i}T_j,\quad \forall g \in G,
\end{equation}
and
\begin{equation}
\label{eq:intertwiner_def_basis2}
[X,T_i]=\sum_{j\in I}\braket{e^j,X\,e_i}T_j,\quad \forall X \in \mathfrak{g}
\quad\textnormal{and}\quad
k\,T_i\, k^{-1}=\sum_{j\in I}\braket{e^j,k\,e_i}T_j,\quad \forall k \in K,
\end{equation}
where $\braket{\cdot,\cdot}$ is the dual pairing of $V_0^*$ and $V_0$ and $e^j\in V_0^*$, $j\in I$ is the dual basis defined by
\begin{equation}
\braket{e^j,e_i}=e^j(e_i)=\tensor{\delta}{^j_i}.
\end{equation}

The definition of tensor operator can be simplified when $K$ is connected, since one can simply require the the operator commutes with every element of $\mathfrak{g}$. In fact one has\footnote{The proof is based on~\cite{stackexchange}.}
\begin{proposition}
If $K$ is connected a linear map $T:V\rightarrow V'$ is a $(\mathfrak{g},K)$-module homomorphism if
\begin{equation*}
T \circ X= X\circ T,\quad \forall X\in \mathfrak{g}.
\end{equation*}
\begin{proof}
Let $\mathfrak{k}\subseteq\mathfrak{g}$ be the Lie algebra of $K$. For any $X\in \mathfrak{k}$, $v\in V$, $u\in V^*$ one has
\begin{equation}
\left.\frac{d}{dt}\right|_{t=0}\braket{u,\left( T\circ \exp(tX) - \exp(tX)\circ T \right)v}=\braket{u,\left( T\circ X - X\circ T \right)v}=0.
\end{equation}
Since the derivative vanishes, it must be
\begin{equation}
\braket{u,\left( T\circ \exp(X) - \exp(X)\circ T \right)v}=\braket{u,\left( T\circ \exp(0) - \exp(0)\circ T \right)v},\quad \forall v\in V,\quad\forall u \in V^*,
\end{equation}
so that 
\begin{equation}
T\circ \exp(X)=\exp(X)\circ T,\quad \forall X\in \mathfrak{g}.
\end{equation}
However, if $K$ is connected, $\exp(\mathfrak{k})\subseteq K$ generates $K$~\cite{kosmann2009}, hence
\begin{equation}
T\circ k = k \circ T,\quad \forall k \in K.
\end{equation} 
\end{proof}
\end{proposition}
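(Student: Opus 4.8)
The plan is to deduce the one missing part of the homomorphism property — commutation with $K$ — from the standing hypothesis that $T$ commutes with all of $\mathfrak{g}$, and in particular with the Lie algebra $\mathfrak{k}\subseteq\mathfrak{g}$ of $K$. Since $K$ is connected it is generated by $\exp(\mathfrak{k})$, so it suffices to show that $T\circ\exp(X)=\exp(X)\circ T$ for every $X\in\mathfrak{k}$; note that $\exp(X)\in K$ acts on both $V$ and $V'$ precisely because these are $(\mathfrak{g},K)$-modules.

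To avoid topological complications, I would fix $X\in\mathfrak{k}$ and $v\in V$ and study the curve $t\mapsto\Phi(t)v$ in $V'$, where $\Phi(t):=T\circ\exp(tX)-\exp(tX)\circ T$. By the definition of a $(\mathfrak{g},K)$-module, $v$ lies in a finite-dimensional $K$-invariant subspace $W\subseteq V$ on which $K$ acts smoothly with $\left.\frac{d}{dt}\right|_{t=0}\exp(tX)w=Xw$; hence $t\mapsto T\exp(tX)v$ is a smooth curve in the finite-dimensional space $T(W)$, and $t\mapsto\exp(tX)Tv$ is smooth as well. Both curves stay inside a common finite-dimensional $K$-invariant (hence $\mathfrak{k}$-invariant) subspace $\widetilde W\subseteq V'$, so $t\mapsto\Phi(t)v$ is a smooth $\widetilde W$-valued curve with $\Phi(0)v=0$. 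Differentiating, using $\frac{d}{dt}\exp(tX)=X\exp(tX)$ and then the hypothesis $TX=XT$,
\begin{equation*}
\frac{d}{dt}\big(\Phi(t)v\big)=TX\exp(tX)v-X\exp(tX)Tv=X\big(\Phi(t)v\big),
\end{equation*}
a homogeneous linear ODE on the finite-dimensional space $\widetilde W$ with vanishing initial datum; hence $\Phi(t)v=0$ for all $t$. Since $v$ was arbitrary, $\Phi(t)\equiv 0$, so in particular $T\circ\exp(X)=\exp(X)\circ T$.

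Finally, every element of $K$ is a finite product of exponentials $\exp(X_i)$ with $X_i\in\mathfrak{k}$, so $T$ commutes with all of $K$; combined with the hypothesis, $T$ is a $(\mathfrak{g},K)$-module homomorphism. I expect the only step needing attention to be the differentiation of $\Phi$, which is why I would reduce it to finite-dimensional $K$-invariant subspaces, where $t\mapsto\exp(tX)$ is a smooth one-parameter subgroup of $\GL$ of a finite-dimensional space and ODE uniqueness is elementary; the underlying group-theoretic fact — a connected Lie group is generated by the image of its exponential map — is standard. Note that only $TX=XT$ for $X\in\mathfrak{k}$ is actually used, so the full hypothesis on $\mathfrak{g}$ is more than enough.
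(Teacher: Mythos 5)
Your proof is correct, and its skeleton is the same as the paper's: commute $T$ past $\exp(tX)$ for $X\in\mathfrak{k}$ by a differentiation argument, then use that a connected $K$ is generated by $\exp(\mathfrak{k})$. The difference is in how the middle step is closed. The paper differentiates the matrix coefficient $\braket{u,(T\circ\exp(tX)-\exp(tX)\circ T)v}$ only at $t=0$, observes it vanishes there, and then asserts constancy in $t$; strictly speaking the derivative at a general $t_0$ is not zero on the nose but equals $\braket{u,X(T\exp(t_0X)-\exp(t_0X)T)v}$, i.e.\ one lands exactly on the linear ODE you wrote down, so the paper's ``derivative vanishes, hence constant'' is a shortcut whose honest completion is your argument (or, equivalently, differentiating $\exp(-tX)\,T\exp(tX)$). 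Your version also supplies the analytic justification the paper leaves implicit: by $K$-finiteness every relevant vector sits in a finite-dimensional $K$-invariant (hence, by the $(\mathfrak{g},K)$-compatibility, $\mathfrak{k}$-invariant) subspace, so the curves are smooth, differentiation through $T$ is legitimate, and ODE uniqueness applies — this is exactly the right substitute for continuity of $T$, which is not assumed here. Your closing remark that only $T\circ X=X\circ T$ for $X\in\mathfrak{k}$ is used is also accurate (the paper's displayed ``$\forall X\in\mathfrak{g}$'' after exponentiating should really read $\mathfrak{k}$, since only $K$ acts by group elements). So: same route, but your write-up is the more rigorous rendering of the step the paper glosses over.
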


\subsection{\WE\ theorem}

One of the most useful property of irreducible tensor operators is the so-called Wigner--Eckart theorem, originally proved for compact groups~\cite{barut} and later extended to non-compact groups~\cite{locallycompact} for the particular case of tensor operators transforming as (infinite-dimensional) \emph{unitary} representations, which we don't consider.

We will present here a version of the theorem working for a generic Lie group $G$. In order to prove it, the following Lemma is needed:
\begin{lemma}
Let $V_0$, $V$, $V$ be irreducible $(\mathfrak{g},K)$-modules for a Lie group $G$, with $V_0$ finite-dimensional. If a \CG\ decomposition into irreducible modules for $V_0\otimes V$ exists, a non-zero intertwiner
\begin{equation*}
T:V_0\otimes V\rightarrow V'
\end{equation*}
is possible if and only if $V'$ appears (at least once) in the decomposition.
If $\mathcal{T}$ is the vector space of all such intertwiners, $\dim \mathcal{T}$ equals the multiplicity of $V'$ in the decomposition, and a basis is provided by the projections in each of the submodules $W_\alpha\subseteq V_0\otimes V$, $W_\alpha\cong V'$, with $\alpha$ keeping track of the multiplicities.
\end{lemma}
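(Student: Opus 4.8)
The plan is to exploit Schur's lemma together with the existence of the \CG\ decomposition $V_0\otimes V\cong\bigoplus_\alpha W_\alpha$, where each $W_\alpha$ is an irreducible submodule (and the sum is over all irreducible constituents, with $\alpha$ indexing multiplicities). First I would observe that an intertwiner $T:V_0\otimes V\to V'$ restricts on each $W_\alpha$ to an intertwiner $T|_{W_\alpha}:W_\alpha\to V'$ between irreducible $(\mathfrak{g},K)$-modules. Schur's lemma (in the version valid for $(\mathfrak{g},K)$-modules over $\C$, e.g.\ \cite{wallach}) then forces $T|_{W_\alpha}$ to be either zero or an isomorphism; in the latter case $W_\alpha\cong V'$. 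Hence if $T\neq 0$ then at least one $W_\alpha$ is isomorphic to $V'$, establishing the ``only if'' direction: $V'$ appears in the decomposition.

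For the converse and the dimension count, I would argue as follows. Let $P_\alpha:V_0\otimes V\to W_\alpha$ denote the projection onto the $\alpha$-th summand along the others; since the decomposition is a direct sum of submodules, each $P_\alpha$ is an intertwiner. For each $\alpha$ with $W_\alpha\cong V'$, fix an isomorphism $\phi_\alpha:W_\alpha\to V'$ (again an intertwiner), and set $T_\alpha:=\phi_\alpha\circ P_\alpha:V_0\otimes V\to V'$. These are non-zero intertwiners, which proves the ``if'' direction. To see that $\{T_\alpha\}$ is a basis of $\mathcal{T}$: given any intertwiner $T$, restrict to each $W_\beta$; by the Schur argument $T|_{W_\beta}$ is zero unless $W_\beta\cong V'$, in which case $T|_{W_\beta}=c_\beta\,\phi_\beta$ for a unique scalar $c_\beta\in\C$ (two isomorphisms between irreducibles differ by a scalar, by Schur). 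Since every $v\in V_0\otimes V$ decomposes uniquely as $v=\sum_\beta P_\beta v$ with $P_\beta v\in W_\beta$, linearity of $T$ gives $T=\sum_{\beta:\,W_\beta\cong V'} c_\beta\,T_\beta$, so the $T_\alpha$ span $\mathcal{T}$. Linear independence is immediate: applying $\sum_\alpha c_\alpha T_\alpha$ to a vector supported in a single $W_{\alpha_0}$ returns $c_{\alpha_0}\phi_{\alpha_0}$ of that vector, which vanishes only if $c_{\alpha_0}=0$. Therefore $\dim\mathcal{T}$ equals the number of $\alpha$ with $W_\alpha\cong V'$, i.e.\ the multiplicity of $V'$, and $\{T_\alpha\}$ is the claimed basis of projections.

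The only real subtlety — the step I'd expect to be the main obstacle — is justifying Schur's lemma in this setting, since $V_0\otimes V$ and $V'$ may be infinite-dimensional and the relevant category is that of $(\mathfrak{g},K)$-modules rather than finite-dimensional $G$-modules. The resolution is that admissible irreducible $(\mathfrak{g},K)$-modules have countable dimension over $\C$ and $\C$ is uncountable, so Dixmier's version of Schur's lemma applies: any $(\mathfrak{g},K)$-module endomorphism of an irreducible module is scalar, and hence any homomorphism between two irreducibles is either $0$ or an isomorphism and any two isomorphisms differ by a nonzero scalar. I would state this explicitly and cite \cite{wallach}. A secondary point worth a sentence is that the hypotheses only assume a \CG\ decomposition exists, not that it is canonical; since all the arguments use only \emph{a} fixed decomposition, the statement about $\dim\mathcal{T}$ — an intrinsic quantity — shows a posteriori that the multiplicity is well defined, so no circularity arises.
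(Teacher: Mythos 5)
Your proof is correct and follows essentially the same route as the paper's: apply Schur's lemma for irreducible $(\mathfrak{g},K)$-modules to the restrictions $T|_{W_\alpha}$ and conclude that the projections onto the summands isomorphic to $V'$ span the space of intertwiners. Your extra care in separating the projection $P_\alpha$ from the intertwiner $\phi_\alpha\circ P_\alpha$ and in justifying Schur's lemma (via the countable-dimension/Dixmier argument for admissible modules) only makes explicit what the paper leaves implicit with its citation of~\cite{wallach}.
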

\begin{proof}
Let $T:V_0\otimes V\rightarrow V'$ be a $(\mathfrak{g},K)$-module homomorphism.
Shur's Lemma for irreducible $(\mathfrak{g},K)$-modules~\cite{wallach} guarantees that, if $W\subseteq V_0\otimes V$ is a submodule,
\begin{equation}
T|_{W}\propto
\begin{cases}
\1 & \casesif W\cong V'\\
0 & \casestextn{otherwise}.
\end{cases}
\end{equation}
It is then trivial to see that any such $T$ can be written as a linear combinations of the independent maps $T^\alpha$ that project $V_0\otimes V$ on each $W_\alpha \cong V'$.
\end{proof}
We can now state the theorem, which trivially follow from the Lemma.
\begin{theorem*}[\WE]
Let $T:V_0\otimes V \rightarrow V'$ be an irreducible tensor operator, with $V$, $V'$ irreducible. If a \CG\ decomposition for $V_0\otimes V$ exists, $T$ is a linear combination of the projections $T^\alpha:V_0\otimes V\rightarrow W_\alpha$ into each irreducible component $W_\alpha\cong V'$. If $V'\not\subseteq V_0\otimes V$ the tensor operator must necessarily vanish.
\end{theorem*}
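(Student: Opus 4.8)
The plan is to derive the statement directly from the preceding Lemma, of which it is essentially a restatement in the language of tensor operators. First I would recall that, by definition, an irreducible tensor operator is a collection of operators $T_i:V\to V'$, indexed by a basis $\{e_i\}$ of a finite-dimensional \emph{irreducible} module $V_0$, satisfying the relations~\eqref{eq:intertwiner_def_basis2}. Assembling these into the single linear map $T:V_0\otimes V\to V'$, $T(e_i\otimes v)=T_i v$, the relations~\eqref{eq:intertwiner_def_basis2} say precisely that $T\circ X=X\circ T$ for all $X\in\mathfrak{g}$ and $T\circ k=k\circ T$ for all $k\in K$, where $\mathfrak{g}$ acts on $V_0\otimes V$ by the Leibniz rule and $K$ diagonally. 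Thus $T$ is exactly an element of the space $\mathcal{T}$ of $(\mathfrak{g},K)$-module homomorphisms $V_0\otimes V\to V'$ appearing in the Lemma.

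Next I would invoke the Lemma directly. Since $V_0$, $V$ and $V'$ are all irreducible and a \CG\ decomposition $V_0\otimes V\cong\bigoplus_\alpha W_\alpha$ into irreducible submodules is assumed to exist, the Lemma gives that $\dim\mathcal{T}$ equals the multiplicity of $V'$ among the $W_\alpha$, with a basis furnished by the projections $T^\alpha:V_0\otimes V\to W_\alpha$ onto those summands isomorphic to $V'$. Hence every irreducible tensor operator is a linear combination $T=\sum_\alpha c_\alpha T^\alpha$ with $c_\alpha\in\C$; and if $V'$ does not occur in the decomposition the multiplicity is $0$, so $\mathcal{T}=\{0\}$ and $T$ necessarily vanishes. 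This completes the argument.

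I do not anticipate any real obstacle here: the only point needing a line of care is checking that the componentwise relations~\eqref{eq:intertwiner_def_basis2} are genuinely equivalent to the intertwining property, and the essential representation-theoretic input --- Schur's lemma for irreducible $(\mathfrak{g},K)$-modules --- has already been absorbed into the proof of the Lemma. The substantive content of the \WE\ theorem for the Lorentz group therefore lies not in this step but in the explicit \CG\ decompositions of Section~\ref{sec:representation_theory} (Propositions~\ref{prop:coupling} and~\ref{prop:continuity} together with their corollary), which determine exactly when a nonzero tensor operator between prescribed irreducibles can exist and with what multiplicity.
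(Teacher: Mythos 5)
Your argument is correct and matches the paper's own reasoning exactly: the theorem is stated there as following ``trivially'' from the preceding Lemma, which is precisely how you derive it (an irreducible tensor operator is by definition an intertwiner $V_0\otimes V\to V'$, so the Lemma's basis of projections $T^\alpha$ gives the result, and absence of $V'$ in the decomposition forces $T=0$). Your closing remark that the substantive content lies in the explicit \CG\ decompositions of Section~\ref{sec:representation_theory} is also in line with the paper's own comments after the theorem.
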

Note that, although the theorem works for a generic group, no information is available when $V_0\otimes V$ is not decomposable; moreover, knowledge of the decomposition is needed in order to gain any real information about the operator: it is for this reason that for non-compact groups, where the non-trivial decomposition of the product of a finite-dimensional and an infinite-dimensional is needed, one has to study it case by case.

We obtain in particular, for $\Spin(3,1)$, the following Corollaries, where the conditions for the decomposition to exist were studied in Section~\ref{sec:CG_decomposition}.
\begin{corollary}
Let $T:(\gamma_1,\gamma_2)\otimes V_{\lambda,\rho}\rightarrow V_{\lambda',\rho'}$ be a tensor operator for $\Spin(3,1)$, with $V_{\lambda,\rho}$, $V_{\lambda',\rho'}$ infinite-dimensional. If a \CG\ decomposition for $(\gamma_1,\gamma_2)\otimes V_{\lambda,\rho}$ exists, the matrix elements of T satisfy
\begin{equation*}
\braket{(\lambda',\rho')\,j',m'|T|(\gamma_1,\gamma_2)\,\mu_1,\mu_2;(\lambda,\rho)\,j,m}
=
\sum_\alpha
N_\alpha \mathrm{B}\{[\alpha] (\lambda',\rho')\, j',m'|(\gamma_1,\gamma_2)\,\mu_1,\mu_2;(\lambda,\rho)\,j,m\},
\end{equation*}
where $\alpha$ keeps track on the multiplicity of $V_{\lambda',\rho'}$ in the decomposition and the proportionality factors $N_\alpha$ do not depend on the particular vectors being evaluated. In particular, if $V_{\lambda',\rho'}$ does not appear in the \CG\ decomposition, T must identically vanish.
\end{corollary}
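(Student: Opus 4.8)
The plan is to specialise the Wigner--Eckart theorem to this case and read off the matrix elements of the projection maps $T^\alpha$ in the product basis. By the theorem, once a \CG\ decomposition $(\gamma_1,\gamma_2)\otimes V_{\lambda,\rho}\cong\bigoplus_{(\Lambda,\Rho)}\bigoplus_\alpha W_{[\alpha](\Lambda,\Rho)}$ exists (the conditions for which were settled in Section~\ref{sec:CG_decomposition}), every tensor operator with target $V_{\lambda',\rho'}$ is a linear combination $T=\sum_\alpha N_\alpha T^\alpha$ of maps built from the submodules $W_{[\alpha](\lambda',\rho')}\cong V_{\lambda',\rho'}$, with the $N_\alpha$ constants independent of the vectors being evaluated; if no such submodule occurs the sum is empty and $T\equiv 0$, which is the last assertion. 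So it suffices to prove that, with the natural normalisation, $\braket{(\lambda',\rho')\,j',m'|T^\alpha|(\gamma_1,\gamma_2)\,\mu_1,\mu_2;(\lambda,\rho)\,j,m}$ equals the inverse \CG\ coefficient $\mathrm{B}\{[\alpha](\lambda',\rho')\,j',m'|(\gamma_1,\gamma_2)\,\mu_1,\mu_2;(\lambda,\rho)\,j,m\}$.

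First I would make the $T^\alpha$ precise. Since the decomposition is a direct sum, for each copy $W_{[\alpha](\lambda',\rho')}$ there is a well-defined $(\mathfrak{g},K)$-module projection $p_\alpha\colon V_0\otimes V\to W_{[\alpha](\lambda',\rho')}$ along the remaining summands, and I set $T^\alpha:=\iota_\alpha\circ p_\alpha$, where $\iota_\alpha\colon W_{[\alpha](\lambda',\rho')}\to V_{\lambda',\rho'}$ sends $\ket{[\alpha](\lambda',\rho')\,J,M}$ to $\ket{(\lambda',\rho')\,J,M}$. That $\iota_\alpha$ is a $(\mathfrak{g},K)$-homomorphism is exactly what the normalisation built into the construction of Section~\ref{sec:CG_decomposition} guarantees: by Propositions~\ref{prop:JM} and~\ref{prop:coupling} (and its Corollary) the eigenvectors can be rescaled so that the generators act on each summand as in~\eqref{eq:K_actions}, i.e. as they do on $V_{\lambda',\rho'}$; Schur's Lemma then fixes $\iota_\alpha$ up to the single scalar which is absorbed into $N_\alpha$. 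These $T^\alpha$ are precisely the basis of $\mathcal{T}$ furnished by the Lemma preceding the Wigner--Eckart theorem.

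Next I would expand the product basis vector over the $(\cC_1,\cC_2)$-eigenbasis $\ket{[\beta](\Lambda,\Rho)\,J,M}$ using the inverse \CG\ coefficients of $(\gamma_1,\gamma_2)\otimes V_{\lambda,\rho}$ --- obtained, as in the proof of the Corollary of Section~\ref{sec:CG_decomposition}, by composing those of the two successive couplings $F^+_{\gamma_2}\otimes V_{\lambda,\rho}$ and $F^-_{\gamma_1}\otimes V_{\Lambda,\Rho}$, the coefficient of $\ket{[\beta](\Lambda,\Rho)\,J,M}$ being $\mathrm{B}\{[\beta](\Lambda,\Rho)\,J,M|(\gamma_1,\gamma_2)\,\mu_1,\mu_2;(\lambda,\rho)\,j,m\}$. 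Applying $T^\alpha$ kills every term except those in $W_{[\alpha](\lambda',\rho')}$, on which it acts as $\iota_\alpha$, and pairing with $\bra{(\lambda',\rho')\,j',m'}$ leaves only the term with $(\Lambda,\Rho)=(\lambda',\rho')$, $\beta=\alpha$, $J=j'$, $M=m'$; its coefficient is exactly $\mathrm{B}\{[\alpha](\lambda',\rho')\,j',m'|(\gamma_1,\gamma_2)\,\mu_1,\mu_2;(\lambda,\rho)\,j,m\}$. Substituting into $T=\sum_\alpha N_\alpha T^\alpha$ gives the displayed identity. I expect no real obstacle here: the content is entirely in the Wigner--Eckart theorem and in the decomposition of Section~\ref{sec:CG_decomposition}, and the only point demanding care is bookkeeping --- checking that the eigenbasis produced by the two-step coupling can indeed be normalised to satisfy~\eqref{eq:K_actions} on every summand (so that $\iota_\alpha$ is canonical and all the freedom sits in $N_\alpha$), and that the multiplicity label $[\alpha]$ on the inverse \CG\ coefficient is the same one that indexes the copies $W_{[\alpha](\lambda',\rho')}$.
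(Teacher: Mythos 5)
Your proposal is correct and follows essentially the same route as the paper, which states this corollary as an immediate consequence of the \WE\ theorem: the operator is a combination $\sum_\alpha N_\alpha T^\alpha$ of the projections onto the copies of $V_{\lambda',\rho'}$ in the decomposition of Section~\ref{sec:CG_decomposition}, whose matrix elements in the product basis are exactly the inverse \CG\ coefficients. Your extra care about normalising the eigenbasis so that the identification $\iota_\alpha$ is canonical (all remaining freedom sitting in $N_\alpha$) is just the bookkeeping the paper leaves implicit.
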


\begin{corollary}
\label{prop:WE_coroll}
Let $T:F^A_\gamma\otimes V_{\lambda,\rho}\rightarrow V_{\lambda',\rho'}$ be a tensor operator for $\Spin(3,1)$, with $V_{\lambda,\rho}$, $V_{\lambda',\rho'}$ infinite-dimensional. If a \CG\ decomposition for $F^A_\gamma\otimes V_{\lambda,\rho}$ exists, the matrix elements of T satisfy
\begin{equation*}
\braket{(\lambda',\rho')\,j',m'|T_\mu|(\lambda,\rho)\,j,m}\equiv
\braket{(\lambda',\rho')\,j',m'|T|\gamma_A,\mu;(\lambda,\rho)\,j,m}
\propto
\mathrm{B}\{ (\lambda',\rho')\, j'|\gamma_A;(\lambda,\rho)\,j\}\braket{j',m'|\gamma,\mu;j,m},
\end{equation*}
where the proportionality factor does not depend on the particular vectors being evaluated.
In particular, if $V_{\lambda',\rho'}$ does not appear in the \CG\ decomposition, T must identically vanish.
\end{corollary}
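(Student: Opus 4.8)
The plan is to read the claim off the \WE\ theorem together with the coupling analysis of Section~\ref{sec:CG_decomposition}. Since $V_{\lambda,\rho}$ and $V_{\lambda',\rho'}$ are irreducible and a \CG\ decomposition of $F^A_\gamma\otimes V_{\lambda,\rho}$ is assumed to exist, the \WE\ theorem expresses $T$ as a linear combination of the projections $T^\alpha:F^A_\gamma\otimes V_{\lambda,\rho}\to W_\alpha$ onto the submodules $W_\alpha\cong V_{\lambda',\rho'}$, and gives $T\equiv 0$ when no such $W_\alpha$ occurs. The first point to settle is that this decomposition is multiplicity-free, so that at most one term survives. By Proposition~\ref{prop:coupling} the submodules are the $V_{\lambda+\nu,\rho+A\nu}$, $\nu\in\cM_\gamma$; Lemma~\ref{lem:no_double_eigenvalues} shows that the associated Casimir eigenvalue pairs $\bigl(\ii(\lambda+\nu)(\rho+A\nu),\,(\lambda+\nu)^2+(\rho+A\nu)^2-1\bigr)$ are pairwise distinct, and since two irreducible modules $V_{\Lambda,\Rho}$, $V_{\Lambda',\Rho'}$ can be isomorphic only when $(\Lambda',\Rho')=\pm(\Lambda,\Rho)$ (Remark on isomorphic modules), which forces their Casimir eigenvalues to coincide, the $V_{\lambda+\nu,\rho+A\nu}$ are pairwise non-isomorphic. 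Hence $V_{\lambda',\rho'}$ either equals---up to the harmless relabelling $(\lambda',\rho')\mapsto-(\lambda',\rho')$, which can be absorbed into the choice of basis of $V_{\lambda',\rho'}$---exactly one $V_{\lambda+\nu_0,\rho+A\nu_0}$, or it does not appear, in which case $T=0$.

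In the remaining case $T=N\,T^{W}$ for a scalar $N$ and the single projection $T^{W}$ onto $W=V_{\lambda+\nu_0,\rho+A\nu_0}$, and by Schur's lemma $T^{W}$ may, with the normalisations fixed in Section~\ref{sec:CG_decomposition}, be taken to send the $(\cC_1,\cC_2)$-eigenbasis $\ket{(\lambda',\rho')\,J,M}$ of $W$ to the standard basis of $V_{\lambda',\rho'}$, any leftover scalar being folded into $N$. It then remains to compute the matrix element. Inverting the change of basis of Section~\ref{sec:CG_decomposition} and using that the rotation content of $F^A_\gamma$ is the single $\SU(2)$-module $V_\gamma$, so that the $\su(2)$ \CG\ coefficient factors off, gives
\begin{equation*}
\ket{\gamma_A,\mu}\otimes\ket{(\lambda,\rho)\,j,m}=\sum_{(\Lambda,\Rho),J,M}\mathrm{B}\{(\Lambda,\Rho)\,J|\gamma_A;(\lambda,\rho)\,j\}\,\braket{J,M|\gamma,\mu;j,m}\,\ket{(\Lambda,\Rho)\,J,M}.
\end{equation*}
Applying $T^{W}$ keeps only the $(\Lambda,\Rho)=(\lambda',\rho')$ term, and pairing with $\bra{(\lambda',\rho')\,j',m'}$ collapses the $(J,M)$-sum to $(j',m')$ by orthonormality, leaving $\mathrm{B}\{(\lambda',\rho')\,j'|\gamma_A;(\lambda,\rho)\,j\}\,\braket{j',m'|\gamma,\mu;j,m}$; multiplication by $N$ is the asserted formula.

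I expect the only genuinely non-routine step to be the multiplicity-freeness argument---in particular confirming that distinct Casimir eigenvalue pairs exclude isomorphic summands and disposing cleanly of the $(\Lambda,\Rho)\mapsto-(\Lambda,\Rho)$ ambiguity---everything else being a transcription of the \WE\ theorem, of Schur's lemma, and of the change-of-basis relations already at hand. As an alternative route, the statement is exactly the specialisation of the preceding Corollary (for $(\gamma_1,\gamma_2)\otimes V_{\lambda,\rho}$) to $(\gamma_1,\gamma_2)$ of the form $(\gamma,0)$ or $(0,\gamma)$, once one observes that in that case the multiplicity index $\alpha$ is trivial and the inverse \CG\ coefficient factorises as $\mathrm{B}\{(\lambda',\rho')\,j',m'|\gamma_A;(\lambda,\rho)\,j,m\}=\mathrm{B}\{(\lambda',\rho')\,j'|\gamma_A;(\lambda,\rho)\,j\}\,\braket{j',m'|\gamma,\mu;j,m}$.
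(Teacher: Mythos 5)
Your proof is correct and follows essentially the same route as the paper, which obtains this corollary directly from the Wigner--Eckart theorem combined with the decomposition of $F^A_\gamma\otimes V_{\lambda,\rho}$ from Section~\ref{sec:CG_decomposition}, where the coupling is multiplicity-free and the Clebsch--Gordan coefficient factorises into $\mathrm{B}\{(\lambda',\rho')\,j'|\gamma_A;(\lambda,\rho)\,j\}$ times an $\su(2)$ coefficient. Your explicit treatment of multiplicity-freeness via Lemma~\ref{lem:no_double_eigenvalues} and of the $V_{\Lambda,\Rho}\cong V_{-\Lambda,-\Rho}$ relabelling merely makes precise steps the paper leaves implicit.
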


\subsection{\JS\ representation}

As an application of the \WE\ theorem, we will present here a generalisation to to infinite-dimensional $\Spin(3,1)$ representations of a well-known result of $\SU(2)$ representation theory, known as \emph{\JS\ representation}~\cite{schwinger1952}:
\begin{proposition}[\JS\ representation for $\SU(2)$]
\label{prop:JS_SU(2)}
Let $F_\gamma$, $\gamma\in \N/2$ be the complex finite-dimensional irreducible $\su(2)$-module on which the Casimir $J^2$ has eigenvalue $\gamma(\gamma+1)$.
The generators of $\su(2)_\C$, when acting on $F_\gamma$, can be written as
\[
J_0=\half(a_1^\dagger a_1 - a_2^\dagger a_2),\quad J_+=a_1^\dagger a_2,\quad J_-=a_2^\dagger a_1,
\]
where
\[
a_i\in\Lin(F_\gamma,F_{\gamma-\shalf}),\quad a_i^\dagger\in\Lin(F_\gamma,F_{\gamma+\shalf})
\]
and
\[
[a_i,a_j^\dagger]=\delta_{ij},\quad [a_i,a_j]=[a^\dagger_i,a^\dagger_j]=0.
\]
The operators $a_i$, $a_i^\dagger$, $i=1,2$ and $\1$ act on the algebraic direct sum $\bigoplus_{\gamma\in \N_0/2}F_\gamma$ as the (complexified) $5$-dimensional Heisenberg algebra $\mathfrak{h}_2(\R)_\C$, making it a unitary irreducible $\mathfrak{h}_2(\R)$-module.
\end{proposition}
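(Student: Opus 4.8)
The plan is to realise every operator occurring in the statement concretely inside the two-mode bosonic Fock space and then read off the four assertions. First I would fix the (algebraic) Fock space $\mathcal{H}$ with orthonormal basis $\ket{n_1,n_2}$, $n_1,n_2\in\N_0$, and the standard ladder operators $a_i^\dagger\ket{\ldots,n_i,\ldots}=\sqrt{n_i+1}\,\ket{\ldots,n_i+1,\ldots}$, $a_i\ket{\ldots,n_i,\ldots}=\sqrt{n_i}\,\ket{\ldots,n_i-1,\ldots}$, so that $[a_i,a_j^\dagger]=\delta_{ij}$, $[a_i,a_j]=[a_i^\dagger,a_j^\dagger]=0$ hold by construction and $a_i^\dagger$ is the adjoint of $a_i$ for the Fock inner product. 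Defining $J_0,J_\pm$ by the stated formulas, a one-line manipulation of these commutation relations yields $[J_0,J_\pm]=\pm J_\pm$ and $[J_+,J_-]=2J_0$, i.e., the $\su(2)_\C$ brackets, so $\mathcal{H}$ becomes an $\su(2)_\C$-module.

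The next step is to decompose $\mathcal{H}$ and recognise $F_\gamma$ inside it. The total number operator $N:=a_1^\dagger a_1+a_2^\dagger a_2$ commutes with $J_0,J_\pm$, and a short computation gives $J^2=J_0^2+\tfrac12(J_+J_-+J_-J_+)=\tfrac{N}{2}\bigl(\tfrac{N}{2}+1\bigr)$. Hence $\mathcal{H}=\bigoplus_{\gamma\in\N_0/2}\mathcal{H}_{2\gamma}$, where $\mathcal{H}_n:=\ker(N-n)$ is the $(n+1)$-dimensional span of the $\ket{n_1,n_2}$ with $n_1+n_2=n$, is a decomposition into $J^2$-eigenspaces with eigenvalue $\gamma(\gamma+1)$ on $\mathcal{H}_{2\gamma}$. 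Writing $n_1=\gamma+m$, $n_2=\gamma-m$ with $m\in\cM_\gamma$, one checks $J_0\ket{n_1,n_2}=m\ket{n_1,n_2}$ and $J_\pm\ket{n_1,n_2}=C_\pm(\gamma,m)\ket{\gamma+m\pm1,\gamma-m\mp1}$, so $\mathcal{H}_{2\gamma}$ is precisely the irreducible $\su(2)$-module $F_\gamma$ in the Condon--Shortley normalisation; I would simply identify $\ket{\gamma,m}:=\ket{\gamma+m,\gamma-m}$. Since $a_i$ lowers $N$ by one and $a_i^\dagger$ raises it by one, $a_i(\mathcal{H}_{2\gamma})\subseteq\mathcal{H}_{2\gamma-1}$ and $a_i^\dagger(\mathcal{H}_{2\gamma})\subseteq\mathcal{H}_{2\gamma+1}$, i.e., $a_i\in\Lin(F_\gamma,F_{\gamma-\shalf})$ and $a_i^\dagger\in\Lin(F_\gamma,F_{\gamma+\shalf})$, while the commutation relations were already recorded.

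For the last assertion, recall that the $5$-dimensional $\mathfrak{h}_2(\R)$ has a basis $q_1,q_2,p_1,p_2,c$ with $[q_i,p_j]=\delta_{ij}c$ and all other brackets vanishing; the complex span $\Span_\C\{a_1,a_2,a_1^\dagger,a_2^\dagger,\1\}$ is closed under the bracket and, since the canonical commutation relations are exactly the Heisenberg brackets, isomorphic to $\mathfrak{h}_2(\R)_\C$, so $\mathcal{H}$---equivalently the algebraic direct sum $\bigoplus_{\gamma}F_\gamma$---is a module over $\mathfrak{h}_2(\R)_\C$. Its restriction to the real form spanned by $\ii(a_i+a_i^\dagger)$, $a_i-a_i^\dagger$ and $\ii\1$ acts by skew-adjoint operators (because $a_i^\dagger$ is the adjoint of $a_i$ and $\1$ is self-adjoint), so it is a unitary $\mathfrak{h}_2(\R)$-module; and it is irreducible: $N$ is a polynomial in the $a_i,a_i^\dagger$, so it acts on every submodule and forces each to be a direct sum of the finite-dimensional eigenspaces $\mathcal{H}_n$, and picking in a nonzero submodule a vector of minimal $N$-degree it is killed by $a_1$ and $a_2$, hence---the joint kernel of $a_1,a_2$ being $\C\ket{0,0}$---lies in $\mathcal{H}_0$, while $\ket{0,0}$ is cyclic since $(a_1^\dagger)^{n_1}(a_2^\dagger)^{n_2}\ket{0,0}\propto\ket{n_1,n_2}$.

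I do not expect a genuine obstacle---the result is classical---only two bookkeeping points needing care. The first is keeping the normalisations fixed so that $J_\pm$ reproduces the Condon--Shortley coefficients $C_\pm(\gamma,m)$ on the nose (this pins down the Fock normalisation $\ket{n_1,n_2}\propto(a_1^\dagger)^{n_1}(a_2^\dagger)^{n_2}\ket{0,0}$ and rules out stray signs or scalings). The second is the irreducibility step: one must first observe that an $\mathfrak{h}_2$-submodule is automatically $N$-invariant, hence a direct sum of the $\mathcal{H}_n$, before the lowest-weight argument applies. One may also remark, anticipating the infinite-dimensional generalisation, that $(a_1^\dagger,a_2^\dagger)$, and likewise a suitable pair built from $a_1,a_2$, are spin-$\shalf$ tensor operators for $\su(2)$, so the construction is already an instance of the \WE\ framework of the preceding subsection.
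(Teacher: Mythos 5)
Your proposal is correct and complete: the two-mode Fock-space (Schwinger boson) realisation, the computation $J^2=\tfrac{N}{2}\bigl(\tfrac{N}{2}+1\bigr)$ with $N=a_1^\dagger a_1+a_2^\dagger a_2$, the identification $\ket{\gamma,m}=\ket{\gamma+m,\gamma-m}$ reproducing the Condon--Shortley coefficients $C_\pm(\gamma,m)$, and the lowest-weight/cyclicity argument for irreducibility of the $\mathfrak{h}_2(\R)$-action all check out. Note, however, that the paper itself does not prove this proposition: it states it and defers the proof to the cited reference~\cite{wigner_eckart}, so there is no in-text argument to compare yours against. Your construction is the classical one, building the representation directly from the oscillator algebra and then recognising the $\su(2)$ structure inside it; the spirit of the paper (and of the cited reference) runs in the opposite direction, obtaining the $a_i$, $a_i^\dagger$ as components of spin-$\shalf$ tensor operators via the \WE\ theorem, exactly as the paper later does for the infinite-dimensional Lorentz modules --- a point you anticipate in your closing remark. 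The only wording to tighten is the irreducibility step: an $\mathfrak{h}_2$-submodule $W$ is $N$-invariant and hence equals $\bigoplus_n (W\cap\mathcal{H}_n)$ (each vector's eigencomponents lie in $W$ by the usual distinct-eigenvalue argument), rather than being ``a direct sum of the eigenspaces $\mathcal{H}_n$'' themselves; with that phrasing fixed, your minimal-degree vector is indeed annihilated by $a_1,a_2$ and the argument closes as you say.
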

A proof can be found in~\cite{wigner_eckart}.
The $a_i$, $a_i^\dagger$ operators are extensively studied in quantum mechanics, where they are known as \emph{quantum harmonic oscillators}~\cite{messiah2}.

The extension of this result to finite-dimensional $\Spin(3,1)$ representations trivially follows from the fact that $\spin(3,1)_\C\cong \su(2)_\C \oplus \su(2)_\C$ (see Section~\ref{sec:representation_theory}). A generalisation to infinite-dimensional $(\mathfrak{g},K)$-modules can be obtained by making use of tensor operators as follows.
\begin{proposition}
Let $\vec{M}^A=\half(\vec{J}-\ii A \vec{K})$, $A=\pm1$ be the generators of $\su(2)_\C\oplus\su(2)_\C\cong\spin(3,1)_\C$. There exist four tensor operators
\[
T^A:F^A_{\shalf}\otimes V_{\lambda,\rho}\rightarrow V_{\lambda-\shalf,\rho-\shalf[A]},
\quad
\widetilde{T}^A:F^A_{\shalf}\otimes V_{\lambda,\rho}\rightarrow V_{\lambda+\shalf,\rho+\shalf[A]},
\quad A=\pm 1,
\]
where $V_{\lambda,\rho}$ is an arbitrary infinite-dimensional $(\mathfrak{g},K)$-module, such that
\[
M_0^A=-\half \big(T^A_-\widetilde{T}^A_+ + T^A_+\widetilde{T}^A_-\big),\quad M_\pm^A=\pm T^A_\pm \widetilde{T}^A_\pm
\]
when acting on an infinite-dimensional module with $\rho^2\neq\lambda^2$, with
\[
T^A_\pm\ket{(\lambda,\rho)\,j,m}:=T^A\ket{\half_A,\pm\half}\otimes\ket{(\lambda,\rho)\,j,m}.
\]
Their matrix elements are
\begin{subequations}
\label{eq:JS_matrix}
\begin{align}
\braket{(\lambda-\half,\rho-\half[A])\,j-\tfrac{1}{2},m\pm\tfrac{1}{2}|T^A_\pm|(\lambda,\rho)\,j,m}&=\pm\frac{\sqrt{j\mp m}\sqrt{j+\lambda}\sqrt{j+A\rho}}{\sqrt{2j}\sqrt{2j+1}}
\\[0.5em]
\braket{(\lambda-\half,\rho-\half[A])\,j+\tfrac{1}{2},m\pm\tfrac{1}{2}|T^A_\pm|(\lambda,\rho)\,j,m}&=\ii A\frac{\sqrt{j\pm m+1}\sqrt{j-\lambda+1}\sqrt{j-A\rho+1}}{\sqrt{2j+1}\sqrt{2j+2}}
\\[0.5em]
\braket{(\lambda+\half,\rho+\half[A])\,j-\tfrac{1}{2},m\pm\tfrac{1}{2}|\widetilde{T}^A_\pm|(\lambda,\rho)\,j,m}&=\mp\ii A\frac{\sqrt{j\mp m}\sqrt{j-\lambda}\sqrt{j-A\rho}}{\sqrt{2j}\sqrt{2j+1}}
\\[0.5em]
\braket{(\lambda+\half,\rho+\half[A])\,j+\tfrac{1}{2},m\pm\tfrac{1}{2}|\widetilde{T}^A_\pm|(\lambda,\rho)\,j,m}&=\frac{\sqrt{j\pm m+1}\sqrt{j+\lambda+1}\sqrt{j+A\rho+1}}{\sqrt{2j+1}\sqrt{2j+2}},
\end{align}
\end{subequations}
and they satisfy the commutation relations
\begin{equation}
\label{eq:JS_commutation}
[T^A_+,\widetilde{T}^B_-]=[\widetilde{T}^A_+,T^B_-]=\delta^{AB},\quad [T^A_\mu,T^B_\nu]=[\widetilde{T}^A_\mu,\widetilde{T}^B_\nu]=0.
\end{equation}
\end{proposition}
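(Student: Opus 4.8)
The plan is to produce the four operators from the \WE\ theorem, read off their matrix elements from the known \CG\ coefficients, and then verify the bilinear and commutation identities by a direct computation with those matrix elements; only the last step requires real work.

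For existence, recall from the $\gamma=\shalf$ analysis in Section~\ref{sec:CG_decomposition} that, whenever $\rho\neq-A\lambda$, the product $F^A_{\shalf}\otimes V_{\lambda,\rho}$ splits as the direct sum of the two \emph{infinite-dimensional} irreducible modules $V_{\lambda-\shalf,\rho-\shalf[A]}$ and $V_{\lambda+\shalf,\rho+\shalf[A]}$, each with multiplicity one. Since $\rho^2\neq\lambda^2$ is exactly the condition $\rho\neq-A\lambda$ for \emph{both} signs $A=\pm1$, the two products $F^{+}_{\shalf}\otimes V_{\lambda,\rho}$ and $F^{-}_{\shalf}\otimes V_{\lambda,\rho}$ both decompose, and (iterating the $\gamma=\shalf$ case, using that $V_{\lambda,\rho}$ is infinite-dimensional) every intermediate module occurring in the compositions below remains infinite-dimensional, so that~\eqref{eq:K_actions} and Corollary~\ref{prop:WE_coroll} keep applying. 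By the Lemma preceding the \WE\ theorem there is then, for each $A$, up to a scalar exactly one non-zero tensor operator $F^A_{\shalf}\otimes V_{\lambda,\rho}\to V_{\lambda-\shalf,\rho-\shalf[A]}$, which we call $T^A$, and one onto $V_{\lambda+\shalf,\rho+\shalf[A]}$, which we call $\widetilde{T}^A$; each is a multiple of the projection onto the corresponding summand.

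For the matrix elements, Corollary~\ref{prop:WE_coroll} gives
\begin{equation*}
\braket{(\lambda',\rho')\,j',m'|T^A_\mu|(\lambda,\rho)\,j,m}\propto \mathrm{B}\{(\lambda',\rho')\,j'|\shalf_A;(\lambda,\rho)\,j\}\braket{j',m'|\shalf,\mu;j,m},
\end{equation*}
and likewise for $\widetilde{T}^A$. Substituting the inverse \CG\ coefficients of Table~\ref{tab:1/2} and the standard $\su(2)$ coefficients $\braket{j\pm\shalf,m\pm\shalf|\shalf,\pm\shalf;j,m}$, and fixing on each module the free reduced matrix element so that the overall constant becomes $1$, reproduces~\eqref{eq:JS_matrix}; the one point to verify is that a single scalar per module serves both the $j'=j-\shalf$ and the $j'=j+\shalf$ rows, which holds because their ratio is the $\alpha$-recursion established in Appendix~\ref{app:CG}.

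It remains to verify the bilinear identities and~\eqref{eq:JS_commutation}. By the first paragraph the compositions $T^A\widetilde{T}^A$, $\widetilde{T}^A T^A$, $T^A_\mu T^B_\nu$, $\widetilde{T}^A_\mu\widetilde{T}^B_\nu$, $T^A_\mu\widetilde{T}^B_\nu$ are well-defined maps between the appropriate modules, and on $\ket{(\lambda,\rho)\,j,m}$, inserting a complete set of intermediate states, each becomes a sum over $j'\in\{j-\shalf,j+\shalf\}$ of products of entries of~\eqref{eq:JS_matrix} (one factor evaluated on $V_{\lambda,\rho}$, the other on the intermediate module, with labels shifted by $\pm\shalf$), landing in the states with $j\to j,\,j\pm1$. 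To get $M^A_\pm=\pm T^A_\pm\widetilde{T}^A_\pm$ and $M^A_0=-\shalf(T^A_-\widetilde{T}^A_++T^A_+\widetilde{T}^A_-)$ one matches this sum against $\shalf(J_{0,\pm}-\ii A K_{0,\pm})\ket{(\lambda,\rho)\,j,m}$ from the $J$-action and~\eqref{eq:K_actions}: the $j'=j$ contributions collapse to the $P_{\lambda,\rho}(j)=\tfrac{\ii\lambda\rho}{j(j+1)}$ term after the polynomial identity
\begin{equation*}
j(j+\lambda+1)(j+A\rho+1)+(j+1)(j-\lambda)(j-A\rho)=(2j+1)\big(j(j+1)+A\lambda\rho\big),
\end{equation*}
while the $j'=j\pm1$ contributions collapse to the $P^\pm_{\lambda,\rho}(j)$ terms via $\sqrt{j+\lambda}\sqrt{j-\lambda}\sqrt{j+A\rho}\sqrt{j-A\rho}=j\sqrt{2j+1}\sqrt{2j-1}\,P^-_{\lambda,\rho}(j)$, its shift $j\mapsto j+1$, and elementary recursions in $m$ among the $\su(2)$ \CG\ factors. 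Equation~\eqref{eq:JS_commutation} follows from the same computation: for $A=B$ the $j'=j\pm1$ contributions of the two orderings $T^A_+\widetilde{T}^A_-$ and $\widetilde{T}^A_- T^A_+$ coincide and cancel in the commutator while the $j'=j$ ones leave exactly $\pm\1$; for $A\neq B$ the composition sends $V_{\lambda,\rho}$ to a module with $\lambda$ unchanged but $\rho$ shifted by $\pm1$ (generically non-isomorphic) and the commutator is checked to vanish; and $[T^A_\mu,T^B_\nu]=[\widetilde{T}^A_\mu,\widetilde{T}^B_\nu]=0$ likewise --- for $A=B$ immediately from Schur's lemma, since the singlet part of the composite is an intertwiner between non-isomorphic irreducibles. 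The genuine obstacle is this bookkeeping: there is no further structural input after the \WE\ theorem, so everything hinges on controlling the several sign patterns carried by the $\mp$'s in~\eqref{eq:JS_matrix} and checking that the finite sums of products of square roots cancel or recombine exactly as claimed; organising the computation by separating the $j'=j$ and $j'=j\pm1$ channels and pulling out the common $\su(2)$ \CG\ structure is where the effort lies, and it also makes clear why the hypotheses ``$V_{\lambda,\rho}$ infinite-dimensional'' and $\rho^2\neq\lambda^2$ are the ones needed.
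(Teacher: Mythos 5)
Your overall architecture (existence and uniqueness up to scalars from the Lemma/\WE\ theorem, matrix elements from Table~\ref{tab:1/2}, then a channel-by-channel verification of the bilinear and commutation identities against \eqref{eq:K_actions}) is in substance the computation the paper performs, and your polynomial identity for the $j'=j$ channel is correct. The genuine gap is in the normalisation step. Setting ``the overall constant to $1$'' does \emph{not} reproduce \eqref{eq:JS_matrix}: the Table~\ref{tab:1/2} coefficients carry a factor $1/\sqrt{\lambda+A\rho}$, so the entries of \eqref{eq:JS_matrix} correspond to reduced matrix elements $t^A(\lambda,\rho)=\widetilde{t}^A(\lambda,\rho)=\sqrt{\lambda+A\rho}$, and this choice is not a harmless convention. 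The composite $T^A_\pm\widetilde{T}^A_\pm$ on $V_{\lambda,\rho}$ involves $\widetilde{T}^A$ at $(\lambda,\rho)$ and $T^A$ at $(\lambda+\shalf,\rho+\shalf[A])$, and matching against $M^A_\pm$ forces the constraint
\begin{equation*}
t^A(\lambda+\half,\rho+\half[A])\,\widetilde{t}^A(\lambda,\rho)=\lambda+A\rho ,
\end{equation*}
which is exactly what the paper derives before choosing the symmetric solution $\sqrt{\lambda+A\rho}$. With unit reduced matrix elements your step 4 would yield $M^A_\pm=\pm(\lambda+A\rho)\,T^A_\pm\widetilde{T}^A_\pm$ rather than the claimed identity, so the proof as written is internally inconsistent: the verification quietly uses the correctly normalised entries of \eqref{eq:JS_matrix} while the normalisation paragraph asserts a different (and unworkable) choice. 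This is also the second place where $\rho^2\neq\lambda^2$ enters: besides guaranteeing that both decompositions $F^{\pm}_{\shalf}\otimes V_{\lambda,\rho}$ exist, it is precisely the condition $\lambda+A\rho\neq0$ that makes the normalisation constraint solvable.

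Two smaller points. The ``one point to verify'', that a single scalar serves both the $j'=j-\shalf$ and $j'=j+\shalf$ rows, needs no verification: Corollary~\ref{prop:WE_coroll} already states that the proportionality factor is independent of the vectors being evaluated; and the Appendix recursion you invoke governs ratios of \CG\ coefficients across different $(\Lambda,\Rho)$ at fixed $j=J-\gamma$, not across $j'=j\pm\shalf$, so it is not the relevant tool there. Finally, the paper organises the verification slightly differently---it leaves $t^A,\widetilde{t}^A$ free, observes that $V^A_0=-\sqrt{2}M^A_0$, $V^A_{\pm1}=\pm M^A_\pm$ are components of a tensor operator $F^A_1\otimes V_{\lambda,\rho}\rightarrow V_{\lambda,\rho}$, couples $T^A$ and $\widetilde{T}^A$ into a rank-$1$ operator via $\su(2)$ \CG\ coefficients, and compares matrix elements to obtain the constraint above---which keeps the normalisation issue explicit; repairing your argument amounts to restoring exactly this step.
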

\begin{proof}
Consider the tensor operators $T^A$, $\widetilde{T}^A$ described above. As a consequence of Corollary~\ref{prop:WE_coroll}, it must be
\begin{subequations}
\begin{align}
\braket{(\lambda-\half,\rho-\half[A])\,J,M|T^A_\mu|(\lambda,\rho)\,j,m}&=t^A(\lambda,\rho)\,\mathrm{B}\{(\lambda-\half,\rho-\half[A])\,J|\half_A;(\lambda,\rho)\, j \}\braket{J,M|\half,\mu;j,m}
\\[0.5em]
\braket{(\lambda+\half,\rho+\half[A])\,J,M|\widetilde{T}^A_\mu|(\lambda,\rho)\,j,m}&=\widetilde{t}^A(\lambda,\rho)\,\mathrm{B}\{(\lambda+\half,\rho+\half[A])\,J|\half_A;(\lambda,\rho)\, j \}\braket{J,M|\half,\mu;j,m},
\end{align}
\end{subequations}
with $t^A$, $\widetilde{t}^A$ arbitrary functions of $\lambda$ and $\rho$.
Let now
\begin{equation}
V^A_0:=-\sqrt{2}M^A_0,\quad V^A_{\pm 1}:=\pm M^A_\pm;
\end{equation}
one can check that they are the components in the basis $\ket{1_A,\mu}$ of a tensor operator $V^A:F^A_1\otimes V_{\lambda,\rho}\rightarrow V_{\lambda,\rho}$.

Suppose, as an ansatz, that
\begin{equation}
\label{eq:ansatz}
V_\mu^A=\sum_{\mu_1\in\cM_{\shalf}}\sum_{\mu_2\in\cM_{\shalf}}\braket{\half,\mu_1;\half,\mu_2|1,\mu}T^A_{\mu_1}\widetilde T^A_{\mu_2};
\end{equation}
it is a standard result for $\SU(2)$ tensor operators~\cite{barut} (remember that $F^A_\gamma$ is also a $\SU(2)$ representation) that the rhs is indeed the $\mu$ component of a tensor operator transforming like $F^A_1$, so that the ansatz is consistent. Evaluating the \CG\ coefficients, one can rewrite~\eqref{eq:ansatz} as
\begin{equation}
M_0^A=-\half \big(T^A_-\widetilde{T}^A_+ + T^A_+\widetilde{T}^A_-\big),\quad M_\pm^A=\pm T^A_\pm \widetilde{T}^A_\pm.
\end{equation}

Comparing the possible matrix elements of both sides of~\eqref{eq:ansatz} one can explicitly check that they agree, i.e., the ansatz is verified, if and only if
\begin{equation}
t^A(\lambda+\half,\rho+\half[A])\,\widetilde{t}^A(\lambda,\rho)=\lambda+A\rho\neq 0.
\end{equation}
We choose here the particular solution
\begin{equation}
t^A(\lambda,\rho)=\widetilde{t}^A(\lambda,\rho)=\sqrt{\lambda+A\rho};
\end{equation}
with this choice we recover the matrix elements~\eqref{eq:JS_matrix} and, after some  tedious but simple calculations, the commutation relation~\eqref{eq:JS_commutation}.
\end{proof}

Note that the commutation relations~\eqref{eq:JS_commutation} are those of the Lie algebra $\mathfrak{h}_2(\R)_\C\oplus \mathfrak{h}_2(\R)_\C$, the same as the finite-dimensional case (it obviously follows from Proposition~\ref{prop:JS_SU(2)}); in the infinite-dimensional case, however, since $\vec{M}^A$ does \emph{not} act on $V_{\lambda,\rho}$ as a unitary $\su(2)$ representation, the $T^A$, $\widetilde{T}^A$ (with $A$ fixed) operators will not act unitarily as a Heisenberg algebra either.

\section{Concluding remarks}
Although the \WE\ theorem was generalised to arbitrary Lie groups, the actual information it provides still relies on the knowledge of the appropriate \CG\ decomposition.
The author hopes that the techniques developed in this paper for the Lorentz group can be used as a guide in the study of other groups. A possible topic for future research is the generalisation of the theorem to \emph{quantum groups}, in particular to deformed enveloping algebras $\mathcal{U}_q(\mathfrak{sl}(2,\R))$ and $\mathcal{U}_q(\spin(3,1))$; in this context, the tensor operators defined here admit a simple generalization to Hopf algebras by requiring them to intertwine Hopf algebra representations~\cite{rittenberg}.

One topic that remains to be investigated is the nature of the Heisenberg algebra representations provided by the tensor operators considered for the \JS\ representation; the author leaves it for further analysis.

The \WE\ theorem and the \JS\ representation for the special case of $\SL(2,\R)$ have found an application in quantum gravity~\cite{girelli_sellaroli}; it is hoped the generalisations provided here will have applications to physics as well.
\begin{acknowledgments}
The author would like to thank Florian Girelli for introducing him to the topic and providing useful insights.
\end{acknowledgments}
\appendix
\section{Notation}
\begin{center}
\begin{tabular*}{\textwidth}{l @{\extracolsep{\fill}} ll}
\toprule[\lightrulewidth]\toprule[\lightrulewidth]
\multicolumn{3}{l}{Braket notation}\\ \midrule
$\ket{\psi}$ & &	Vector in an inner product space\\
$\braket{\varphi|\psi}$ & & Inner product of $\ket\varphi$ and $\ket\psi$, antilinear in $\ket\varphi$\\
$A^\dagger$ && Hermitian adjoint of an operator $A$\\[1em]
\multicolumn{3}{l}{Representation theory}\\ \midrule
$\mathrm{Lin}(V_1,V_2)$	& &	vector space of linear maps $V_1\rightarrow V_2$\\
$V\otimes W$	& &	(orthogonal) tensor product of two modules $V$, $W$\\
$V\oplus W$	& &	(orthogonal) direct sum of two modules $V$, $W$\\[1em]
\multicolumn{3}{l}{Sets}\\ \midrule
$x+\mathbb{Y}$ & & set defined by $\left\{ x+y \setst y\in\mathbb{Y} \right\}$\\
$x\mathbb{Y}$ & & set defined by $\left\{ xy \setst y\in\mathbb{Y} \right\}$\\
\bottomrule[\lightrulewidth]\bottomrule[\lightrulewidth]
\end{tabular*}
\end{center}
\section{\CG\ coefficients}\label{app:CG}
\begin{table}[h!]
\centering
\begin{tabular}{lccc} \toprule[\lightrulewidth]\toprule[\lightrulewidth]
& $(\Lambda,\Rho)=(\lambda-\shalf,\rho-\tfrac{A}{2})$ & $\quad$& $(\Lambda,\Rho)=(\lambda+\shalf,\rho+\tfrac{A}{2})$  \\ \midrule
$j=J-\half\qquad$ & $\ii A\frac{\sqrt{J-\lambda+\shalf}\sqrt{J-A\rho+\shalf}}{\sqrt{2J+1}\sqrt{\lambda+A\rho}}$ && $\frac{\sqrt{J+\lambda+\shalf}\sqrt{J+A\rho+\shalf}}{\sqrt{2J+1}\sqrt{\lambda+A\rho}}$ \\ \addlinespace[1em]
$j=J+\half$ & $
\frac{\sqrt{J+\lambda+\shalf}\sqrt{J+A\rho+\shalf}}{\sqrt{2J+1}\sqrt{\lambda+A\rho}}$ && $-\ii A\frac{\sqrt{J-\lambda+\shalf}\sqrt{J-A\rho+\shalf}}{\sqrt{2J+1}\sqrt{\lambda+A\rho}}$ \\ \bottomrule[\lightrulewidth]\bottomrule[\lightrulewidth]
\end{tabular}
\caption{\CG\ coefficients $\mathrm{B}\lbrace(\Lambda,\Rho)\,J|\gamma_A;(\lambda,\rho)\,j\rbrace$ for $\gamma=\half$.}
\label{tab:1/2}
\end{table}
Some notions about the \CG\ coefficients of the couplings $F^A_\gamma\otimes V_{\lambda,\rho}$ are presented here. In particular, explicit values for some \CG\ coefficients, namely those with $\gamma=\half$ are listed in Table~\ref{tab:1/2}; the normalisation is chosen so that
\begin{equation}
\mathrm{B}\lbrace(\Lambda,\Rho)\,J|\half_A;(\lambda,\rho)\,j\rbrace \equiv \mathrm{A}\lbrace\gamma_A;(\lambda,\rho)\,j|(\Lambda,\Rho)\,J\rbrace.
\end{equation}
Moreover, we give a proof of the following useful property:
\begin{appxprop}
Consider the product $F^A_\gamma\otimes V_{\lambda,\rho}$, with $\gamma\geq\shalf$ and $V_{\lambda,\rho}$ infinite-dimensional. If a \CG\ decomposition exists, when $J\geq |\lambda|+\gamma$ the \CG\ coefficients satisfy
\begin{equation*}
\frac{\mathrm{B}\lbrace(\Lambda+1,\Rho+A)\,J|\gamma_A;(\lambda,\rho)\,J-\gamma\rbrace}{\mathrm{B}\lbrace(\Lambda,\Rho)\,J|\gamma_A;(\lambda,\rho)\,J-\gamma\rbrace}\propto\frac{\sqrt{J+\Lambda+1}\sqrt{J+A\Rho+1}}{\sqrt{J-\Lambda}\sqrt{J-A\Rho}},\quad \forall (\Lambda,\Rho)\in\cC_J(\lambda,\rho,\gamma,A),
\end{equation*}
where the proportionality factor does not depend on $J$.
\end{appxprop}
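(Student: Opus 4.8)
The plan is to exploit the boost raising operator $K_+$, which acts very simply on the ``stretched'' vectors that occur here. The key observation is that on any Lorentz module one has, from~\eqref{eq:K_actions}, $K_+\ket{(\lambda,\rho)\,j,j}=-P^+_{\lambda,\rho}(j)\sqrt{2j+1}\sqrt{2j+2}\,\ket{(\lambda,\rho)\,j+1,j+1}$: the $j\to j$ and $j\to j-1$ terms of~\eqref{eq:K_actions} drop out because the factors $C_+(j,j)$ and $\sqrt{j-m}\big|_{m=j}$ vanish. Since $J\geq|\lambda|+\gamma$ forces $J-\gamma=\min\Omega_J(\lambda,\gamma)$, the relevant basis vector is the fully stretched product $\ket{(J-\gamma)\,J}=\ket{\gamma_A,\gamma}\otimes\ket{(\lambda,\rho)\,J-\gamma,J-\gamma}$.

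First I would act with $K_+=K_+\otimes\1+\1\otimes K_+$ in two ways. Directly, using $K_+\ket{\gamma_A,\gamma}=\ii A\,C_+(\gamma,\gamma)\ket{\gamma_A,\gamma+1}=\vec 0$ together with the observation above applied to the second factor,
\begin{equation*}
K_+\ket{(J-\gamma)\,J}=-P^+_{\lambda,\rho}(J-\gamma)\sqrt{2J-2\gamma+1}\sqrt{2J-2\gamma+2}\;\ket{(J+1-\gamma)\,J+1}.
\end{equation*}
Alternatively, inserting $\ket{(J-\gamma)\,J}=\sum_{\nu\in\cM_\gamma}\mathrm{B}\{(\lambda+\nu,\rho+A\nu)\,J|\gamma_A;(\lambda,\rho)\,J-\gamma\}\ket{(\lambda+\nu,\rho+A\nu)\,J}$ and acting with $K_+$ inside each submodule $V_{\lambda+\nu,\rho+A\nu}$, where by the same reasoning $K_+\ket{(\Lambda,\Rho)\,J,J}=-P^+_{\Lambda,\Rho}(J)\sqrt{2J+1}\sqrt{2J+2}\,\ket{(\Lambda,\Rho)\,J+1,J+1}$, reexpresses the same vector over the $\ket{(\lambda+\nu,\rho+A\nu)\,J+1,J+1}$. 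Since $\dim V^{J+1}_{J+1}=2\gamma+1$ and these $2\gamma+1$ vectors carry pairwise distinct Casimir eigenvalues (Lemma~\ref{lem:no_double_eigenvalues}, applicable exactly because a \CG\ decomposition is assumed), they are linearly independent, and comparing coefficients yields, for each admissible $\nu$,
\begin{equation*}
\frac{\mathrm{B}\{(\lambda+\nu,\rho+A\nu)\,J+1|\gamma_A;(\lambda,\rho)\,J+1-\gamma\}}{\mathrm{B}\{(\lambda+\nu,\rho+A\nu)\,J|\gamma_A;(\lambda,\rho)\,J-\gamma\}}=\frac{P^+_{\lambda+\nu,\rho+A\nu}(J)}{P^+_{\lambda,\rho}(J-\gamma)}\cdot\frac{\sqrt{2J+1}\sqrt{2J+2}}{\sqrt{2J-2\gamma+1}\sqrt{2J-2\gamma+2}}.
\end{equation*}

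Next I would take the quotient of this recursion for $\nu+1$ over $\nu$: everything involving $\gamma$ and $P^+_{\lambda,\rho}(J-\gamma)$ cancels, leaving $g(J+1)/g(J)=P^+_{\Lambda+1,\Rho+A}(J)/P^+_{\Lambda,\Rho}(J)$, where $g(J)$ is the ratio in the statement and $(\Lambda,\Rho)=(\lambda+\nu,\rho+A\nu)$. Writing $P^+_{\lambda,\rho}(j)=P^-_{\lambda,\rho}(j+1)$ out explicitly, a short computation (using $J+\Lambda+1=(J+1)+\Lambda$ and, according to the sign of $A$, $J+A\Rho+1=(J+1)+\Rho$ or $(J+1)-\Rho$, which makes all square roots cancel) shows this equals $h(J+1)/h(J)$ with $h(J)=\sqrt{J+\Lambda+1}\sqrt{J+A\Rho+1}\big/\!\big(\sqrt{J-\Lambda}\sqrt{J-A\Rho}\big)$. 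Hence $g(J)/h(J)$ is independent of $J$ for $J\geq|\lambda|+\gamma$, which is precisely the claim; no induction on $\gamma$ is needed, and the value of the $J$-independent proportionality constant (which depends only on $\Lambda$, $\Rho$, $A$) plays no role.

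The delicate points are, first, justifying the two evaluations of $K_+\ket{(J-\gamma)\,J}$ --- routine once one checks that the non-raising terms of~\eqref{eq:K_actions} genuinely vanish on the vectors at hand, and that the normalisation convention under which~\eqref{eq:K_actions} holds for the $\ket{(\Lambda,\Rho)\,J,M}$ is the one defining the $\mathrm B$ coefficients --- and, second, ensuring the coefficients in the denominators do not vanish so that the ratios make sense. The latter I would handle by noting that the first displayed identity exhibits $\ket{(J-\gamma)\,J}$, up to nonzero scalars, as $K_+^{\,J-|\lambda|-\gamma}$ applied to $\ket{(|\lambda|)\,|\lambda|+\gamma}$, which reduces the question to the minimal case $J=|\lambda|+\gamma$, where $V_J$ is $(2\gamma+1)$-dimensional and spanned by the $\ket{(\lambda+\nu,\rho+A\nu)\,J}$ (in the regime where all $V_{\lambda+\nu,\rho+A\nu}$ are infinite-dimensional the recursion then propagates nonvanishing to all larger $J$; otherwise the statement is read in its equivalent cross-multiplied form, which follows from the recursion directly).
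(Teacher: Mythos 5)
Your proposal is correct and follows essentially the same route as the paper: acting with $K_+$ on the stretched vector $\ket{(J-\gamma)\,J}$ in two ways, comparing coefficients of the independent eigenvectors $\ket{(\Lambda,\Rho)\,J+1}$ to get a recursion in $J$, and then identifying the recursion factor with $h(J+1)/h(J)$ (the paper solves the same recurrence by telescoping, which is equivalent). Your additional remarks on the normalisation convention for the eigenvectors and on the non-vanishing of the denominator coefficients address points the paper leaves implicit, but they do not change the argument.
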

\begin{proof}
Let $J\geq |\lambda|+\gamma$ and recall that in this case $\Omega_J(\lambda,\gamma)=\{J-\gamma,\dotsc,J+\gamma \}$, $\cC_J(\lambda,\rho,\gamma,A)$ does not depend on $J$, and one has
\begin{equation}
\label{eq:CGapp_vector}
\ket{\gamma_A,\gamma}\otimes\ket{(\lambda,\rho)\,J-\gamma,J-\gamma}\equiv\ket{(J-\gamma)\,J}=\sum_{(\Lambda,\Rho)\in\cC_J}\mathrm{B}\lbrace(\Lambda,\Rho)\,J|\gamma_A;(\lambda,\rho)\,J-\gamma\rbrace\ket{(\Lambda,\Rho)\,J}.
\end{equation}
Acting with $K_+$ on both sides of~\eqref{eq:CGapp_vector} we get respectively
\begin{equation}
\begin{split}
&-P^+_{\lambda,\rho}(J-\gamma)\sqrt{2J-2\gamma+1}\sqrt{2J-2\gamma+2}\ket{\gamma_A,\gamma}\otimes\ket{(\lambda,\rho)\,J+1-\gamma,J+1-\gamma}
\\
=&-P^+_{\lambda,\rho}(J-\gamma)\sqrt{2J-2\gamma+1}\sqrt{2J-2\gamma+2}\ket{(J+1-\gamma)\,J+1}
\\
=&-P^+_{\lambda,\rho}(J-\gamma)\sqrt{2J-2\gamma+1}\sqrt{2J-2\gamma+2}\sum_{(\Lambda,\Rho)\in\cC_{J}}\mathrm{B}\lbrace(\Lambda,\Rho)\,J+1|\gamma_A;(\lambda,\rho)\,J+1-\gamma\rbrace\ket{(\Lambda,\Rho)\,J+1}
\end{split}
\end{equation}
for the lhs and
\begin{equation}
-\sum_{(\Lambda,\Rho)\in\cC_{J}}\mathrm{B}\lbrace(\Lambda,\Rho)\,J|\gamma_A;(\lambda,\rho)\,J-\gamma\rbrace P^+_{\Lambda,\Rho}(J)\sqrt{2J+1}\sqrt{2J+2}\ket{(\Lambda,\Rho)\,J+1}
\end{equation}
for the rhs; equating both results we obtain, for each $(\Lambda,\Rho)\in\cC_{J}(\lambda,\rho,\gamma,A)$,
\begin{multline}
\label{eq:CGapp_equality}
\mathrm{B}\lbrace(\Lambda,\Rho)\,J|\gamma_A;(\lambda,\rho)\,J-\gamma\rbrace P^+_{\Lambda,\Rho}(J)\sqrt{2J+1}\sqrt{2J+2}=\\
\mathrm{B}\lbrace(\Lambda,\Rho)\,J+1|\gamma_A;(\lambda,\rho)\,J+1-\gamma\rbrace P^+_{\lambda,\rho}(J-\gamma)\sqrt{2J-2\gamma+1}\sqrt{2J-2\gamma+2}.
\end{multline}
Now let
\begin{equation}
f_J(\Lambda,\Rho):=\frac{\mathrm{B}\lbrace(\Lambda+1,\Rho+A)\,J|\gamma_A;(\lambda,\rho)\,J-\gamma\rbrace}{\mathrm{B}\lbrace(\Lambda,\Rho)\,J|\gamma_A;(\lambda,\rho)\,J-\gamma\rbrace},\quad (\Lambda,\Rho)\in\cC_{J}(\lambda,\rho,\gamma,A),
\end{equation}
where the numerator may vanish if $(\Lambda+1,\Rho+A)\not\in\cC_{J}(\lambda,\rho,\gamma,A)$; it follows from~\eqref{eq:CGapp_equality} that
\begin{equation}
f_{J+1}(\Lambda,\Rho)=
\frac{P^+_{\Lambda+1,\Rho+A}(J)}{P^+_{\Lambda,\Rho}(J)}f_J(\Lambda,\Rho)=
\frac{\sqrt{J+\Lambda+2}\sqrt{J+A\Rho+2}}{\sqrt{J-\Lambda+1}\sqrt{J-A\Rho+1}}\frac{\sqrt{J-\Lambda}\sqrt{J-A\Rho}}{\sqrt{J+\Lambda+1}\sqrt{J+A\Rho+1}}f_J(\Lambda,\Rho).
\end{equation}
One can check recursively that it must be, for each $n\in\N$,
\begin{equation}
f_{J+n}(\Lambda,\Rho)=\frac{\sqrt{J+n+\Lambda+1}\sqrt{J+n+A\Rho+1}}{\sqrt{J+n-\Lambda}\sqrt{J+n-A\Rho}}\frac{\sqrt{J-\Lambda}\sqrt{J-A\Rho}}{\sqrt{J+\Lambda+1}\sqrt{J+A\Rho+1}}f_J(\Lambda,\Rho);
\end{equation}
the solution of this \emph{recurrence relation} in $J$ is 
\begin{equation}
f_J(\Lambda,\Rho)\propto\frac{\sqrt{J+\Lambda+1}\sqrt{J+A\Rho+1}}{\sqrt{J-\Lambda}\sqrt{J-A\Rho}},
\end{equation}
where the proportionality constant is fixed by the normalisation of the \CG\ coefficients and does not depend on $J$.
\end{proof}

\section{Tridiagonal matrices}\label{app:tridiagonal}
\emph{Tridiagonal} matrices are square matrices whose only non-zero entries are on the main diagonal, the diagonal below it (\emph{subdiagonal}) and the diagonal above it (\emph{superdiagonal}). They can be visualised as 
\begin{equation}
A=
\begin{pmatrix}
b_1 & c_1\\
a_2 & b_2 & c_2\\
& \ddots & \ddots & \ddots\\
&& a_{n-1} &b_{n-1} & c_{n-1}\\
& & & a_n & b_n
\end{pmatrix},
\end{equation}
with the generic entry given by
\begin{equation}
\label{eq:tridiagonal}
A_{ij}=\delta_{i-1,j}\, a_i + \delta_{i,j}\,b_i + \delta_{i+1,j}\, c_i,
\end{equation}
where
\begin{equation}
a_1:=0 \quad \textnormal{and}\quad c_n:=0.
\end{equation}
A result holding for a certain class of tridiagonal matrices, originally presented in~\cite{wigner_eckart}, will be proved here.
\begin{appxprop}\label{prop:tridiagonal}
Let $A$ be a $n\times n$ tridiagonal matrix over a field $\K$.
If the superdiagonal (subdiagonal) entries of $A$ are all non-vanishing, its eigenspaces are all $1$-dimensional.
\end{appxprop}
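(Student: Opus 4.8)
The plan is to exploit the band structure of $A$ to convert the eigenvector equation into a two-term recurrence that is entirely pinned down by a single component of the eigenvector.

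First I would fix an eigenvalue $\mu\in\K$ and a vector $v=(v_1,\dotsc,v_n)$ with $Av=\mu v$. Writing this out componentwise via~\eqref{eq:tridiagonal}, and using the boundary conventions $a_1=c_n=0$, the $i$-th equation reads
\begin{equation*}
a_i v_{i-1}+(b_i-\mu)v_i+c_i v_{i+1}=0,\quad i=1,\dotsc,n,
\end{equation*}
with $v_0,v_{n+1}$ absent. Assuming every superdiagonal entry $c_1,\dotsc,c_{n-1}$ is non-zero, I can solve the $i$-th equation ($1\le i\le n-1$) for $v_{i+1}$ in terms of $v_i$ and $v_{i-1}$; seeding with $v_0=0$, a straightforward induction gives $v_i=p_i(\mu)\,v_1$ for polynomials $p_i\in\K[\mu]$ with $p_1=1$. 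Consequently $v_1=0$ forces $v=\vec{0}$, so any eigenvector is a scalar multiple of the single vector $(p_1(\mu),\dotsc,p_n(\mu))$; this yields $\dim\ker(A-\mu\1)\le 1$, hence $=1$ since $\mu$ is an eigenvalue. The $n$-th equation is not used in this count — it merely encodes that $\mu$ is a root of the characteristic polynomial.

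For the subdiagonal version, I would either transpose — $A^\transpose$ is again tridiagonal, now with superdiagonal entries $a_2,\dotsc,a_n$, and $\dim\ker(A-\mu\1)=\dim\ker(A^\transpose-\mu\1)$ by invariance of the rank under transposition, so the claim transfers — or, equivalently, run the mirror-image recurrence, solving the $i$-th equation for $v_{i-1}$ in terms of $v_i$ and $v_{i+1}$ and seeding with $v_{n+1}=0$, so that every $v_i$ is determined by $v_n$.

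There is no genuine obstacle here: the argument is elementary linear algebra. The only point that needs a little care is the bookkeeping of the boundary terms $a_1=c_n=0$, which is precisely what makes the recurrence closed and seeded by the single unknown $v_1$ (respectively $v_n$) rather than by two, and hence forces the kernel to be at most one-dimensional.
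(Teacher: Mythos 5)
Your proof is correct and follows essentially the same route as the paper's: both solve the eigenvector equations recursively from the first component, using the non-vanishing superdiagonal entries to show every component is a fixed multiple of $v_1$, so the eigenspace is one-dimensional. Your transposition remark for the subdiagonal case is a clean way to phrase what the paper simply calls the analogous argument.
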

\begin{proof}
Consider the case of non-zero superdiagonal entries.
Recall that, if $\lambda\in\K$ is an eigenvalue of $A$, the associated eigenspace is $\ker(A-\lambda\1)$, the vector space of solutions to the equation
\begin{equation}
Ax=\lambda x,\qquad x\in \K^n;
\end{equation}
with the notation introduced in~\eqref{eq:tridiagonal}, this is equivalent to the system of $n$ equations
\begin{equation}
\begin{cases}
\left(b_1-\lambda\right)x_1 + c_1\,x_2 = 0\\
a_i\, x_{i-1} + \left(b_i-\lambda\right)x_i + c_i\,x_{i+1}=0,\qquad i=2,\dotsc,n-1\\
a_n\,x_{n-1} + \left(b_n-\lambda\right)x_n=0.
\end{cases}
\end{equation}
If $x_1=0$ the first equation reduces to
\begin{equation}
c_1\,x_2=0,
\end{equation}
which implies $x_2$ is zero as well, since all the $c$'s are non-vanishing. In general, the $k$th equation will be
\begin{equation}
c_k\,x_{k+1}=0,
\end{equation}
i.e., the only solution with $x_1=0$ is the null vector.

Let then $x_1$ be an arbitrary non-zero value. Substituting each equation in the next one, the first $n-1$ equations reduce to a system of equations of the form
\begin{equation}
\label{eq:tridiagonal_solution}
c_i\,x_{i+1}=\alpha_{i+1}\,x_1,\qquad i=1,\dotsc,n-1,
\end{equation}
with each $\alpha$ depending solely on $\lambda$ and on the matrix entries.
These always have solution, since one can safely divide by the $c$'s;
as a consequence, the solution is \emph{completely} specified by the value of $x_1$, which can be factored out as a scalar coefficient.
The $n$th equation is automatically satisfied, as it was assumed that $\lambda$ is an eigenvalue.
By virtue of eqs. (\ref{eq:tridiagonal_solution}), all the non-zero solutions of the eigenvalue equation are proportional to each other, so that
\begin{equation}
\dim \ker(A-\lambda\1)=1.
\end{equation}

The proof for the case of non-zero subdiagonal entries prooceeds analogously.
\end{proof}
\bibliography{bibliography}
\end{document}